%% file: main.tex
\documentclass[journal,twoside,web]{ieeecolor}
\usepackage{generic}

\usepackage{amsmath,amssymb,amsfonts, mathrsfs}
\usepackage{dsfont}
\usepackage{matlab-prettifier}
\usepackage{algpseudocode}
\usepackage{algorithm}
\usepackage{array}
\usepackage{textcomp}
\usepackage{stfloats}
\usepackage{url}
\usepackage{verbatim}
\usepackage{cite}
\usepackage{tikz}
\usepackage{xcolor}
\usepackage{arydshln}
\usepackage{graphicx}
\usepackage{subcaption}
\usepackage{hyperref}
\hypersetup{hidelinks=true}
\usepackage{textcomp}
\def\BibTeX{{\rm B\kern-.05em{\sc i\kern-.025em b}\kern-.08em
    T\kern-.1667em\lower.7ex\hbox{E}\kern-.125emX}}
\newcommand{\mc}[1]{\mathcal{#1}}
\newcommand{\mr}[1]{\mathrm{#1}}
\newcommand{\ms}[1]{\mathscr{#1}}

\newcommand{\mb}[1]{\mathbb{#1}}

\newcommand{\sbullet}{{\vcenter{\hbox{\tiny$\bullet$}}}}

\newcommand{\Matrix}[3][1.25]{\ensuremath{
		{\arraycolsep= #1 pt \def\arraystretch{1}
			\left[ \begin{array}{#2} #3 \end{array} \right]}}}

\definecolor{bluefig}{RGB}{22,80,225}
\definecolor{orangefig}{RGB}{254,153,0}
\definecolor{redfig}{RGB}{151,21,0}
\definecolor{dgreenfig}{RGB}{126,203,72}
\definecolor{lgreenfig}{RGB}{201,233,179}

\newtheorem{definition}{Definition}
\newtheorem{theorem}{Theorem}

\newtheorem{remark}{Remark}

\newlength{\optlabelwidth}
\newlength{\optrhswidth}
\newlength{\optcolsep}

\begin{document}

\title{Robust Data-Driven Controller Synthesis from Sampled and Uncertain Frequency Response Estimates using Integral Quadratic Constraints}

\author{L. M. Spin, 
W. P. M. H. Heemels, 
N. van de Wouw, 
R. T\'oth 
 \thanks{ 
This work is co-financed by Holland High Tech, top sector High-Tech Systems and Materials, with a PPP innovation subsidy for public-private partnerships for research and development. 
Maurice Heemels has also received funding from the European Research Council under the Advanced ERC Grant Agreement PROACTHIS, no. 101055384.}
\thanks{L. M. Spin and R. T\'oth are with the Control Systems Group, N. van de Wouw is with the Dynamics and Control Group, W. P. M. H. Heemels is with the Control Systems Technology Group, Eindhoven University of Technology, The Netherlands. R. T\'oth is also with the Institute for Computer Science and Control, Budapest, Hungary. Corresponding author: L. M. Spin, email address: \texttt{l.m.spin@tue.nl}.}}


\maketitle

\begin{abstract}
In this paper, we develop a novel MIMO controller synthesis method that relies solely on frequency-domain data of discrete-time systems. The approach is based on an iterative design procedure using \emph{integral quadratic constraints} (IQCs). It generalizes existing data-driven methods that approximate semi-infinite IQC conditions by enforcing them on a finite set of measured frequency points. Such approximations do not provide guarantees on the continuous frequency axis and may fail to capture inter-frequency behavior. We address this limitation by formulating a convex optimization problem that preserves robustness guarantees over the entire frequency range. In particular, the proposed method explicitly accounts for inter-frequency behavior of both plant and controller, leading to certified robust stability and performance. The effectiveness and computational efficiency of the approach are demonstrated using experimentally obtained frequency response data from industrial-grade semiconductor equipment. The synthesized controller is implemented on the physical setup, and its performance is validated in real-time experiments.
\end{abstract}

\begin{IEEEkeywords}
Integral Quadratic Constraints, Data-Driven Control, Robust Control, Convex Controller Synthesis, Inter Frequency-Sample Behavior.
\end{IEEEkeywords}

\input{sections/Introduction.tex}
\input{sections/ProblemFormulation.tex}

\input{sections/SolutionApproach.tex}
\input{sections/MainResult.tex}
\input{sections/GeneratorConstruction.tex}

\input{sections/Wirebonder.tex}
\input{sections/Conclusion.tex}
\input{sections/Acknowledgement.tex}
\input{sections/AppendixA.tex}
\bibliographystyle{IEEEtran}
\bibliography{refs}

 




\vfill

\end{document}

%% file: sections/Introduction.tex
\section{Introduction}
\IEEEPARstart{M}{echatronic} motion systems are pivotal components in industrial processes and manufacturing applications, including wafer scanners, industrial printers, pick-and-place machines, and wire bonders~\cite{butler2011position,ohnishi1996motion,oomen2018advanced,blanken2019multivariable,weekers2024data,kostic2004modeling}. These systems are becoming increasingly complex due to flexible dynamics, position-dependent dynamics, and non-negligible interactions between motion axes in the bandwidth region~\cite{steinbuch2013design}. As the complexity of the dynamics and the number of inputs and outputs grow, parametric system identification or first-principle modelling become less effective for the construction of low-complexity but highly accurate dynamic models. This limitation underlines the need for a data-driven controller design that does not rely on parametric models~\cite{hou2013model}. In fact, the high-performance mechatronics industry has relied on \emph{frequency response function} (FRF) estimates characterizing the dynamics, rather than parametric models, as the basis for controller design~\cite{boerlage2005frequency}.

Traditionally, FRF-based controller design methods have been grounded in Cauchy's principle argument to ensure stability~\cite{broens2024frequency} and using mixed-sensitivity shaping to optimize performance~\cite{van2002multivariable}. Non-convex controller tuning techniques, such as \emph{particle swarm optimization} (PSO)~\cite{kennedy1995particle, xin2013self}, have proven effective for FRF-based controller tuning when only a limited number of design parameters are involved. However, as the complexity of the machine dynamics increases, so does controller complexity, and such methods do not scale efficiently for \emph{multiple-input multiple-output} (MIMO) systems. This highlights the need for \emph{convex} alternatives that enable computationally feasible controller synthesis based on FRF data.

In their recent work \cite{schuchert2024data}, the authors successfully combine MIMO stability and performance in a convex approach based on FRF data. The proposed design approach is iterative in nature, with stability guaranteed through 
an initial, stabilizing controller, while performance is optimized at each iteration. This approach assumes the availability of \emph{exact FRF data} over the \emph{entire frequency spectrum}. Several extensions have been proposed that account for FRF estimation errors by modeling those using an ellipsoidal uncertainty in a data-driven manner: in \cite{gupta2023data} for \emph{single-input single-output} (SISO) systems, and in \cite{gupta2024non, gupta2023hybrid} for MIMO systems. Another important consideration is the sampling of the spectrum. Since FRFs obtained from non-parametric identification are only available at finitely many frequency points, the intersample behavior of the system is unknown. This uncertainty aspect must be explicitly addressed to ensure that stability and performance guarantees provided in the design phase persist during deployment on the real system. In \cite{schuchert2024robotarm}, the spectrum sampling problem is tackled by introducing additional positive definite constraints on a multiplier, leveraging a \emph{Kalman-Yakubovich-Popov} (KYP) argument to guarantee robust stability and performance. However, estimation errors are not considered in that work. In contrast, \cite{schuchert2024robust} discusses both sampling and estimation errors for SISO systems. The authors take an uncertainty ball around each FRF data point, providing finite constraints to characterize the convex hull, thereby achieving robust guarantees.

Despite these advances, existing methods either treat estimation error or spectrum sampling as isolated problems, or are restricted to the SISO settings. For MIMO systems, the interplay between uncertainty and inter-sample behavior becomes significantly more intricate, since stability and performance must hold simultaneously across multiple interacting channels. This calls for a framework that can unify both aspects, namely estimation error and spectral sampling, within a single, convex, and computationally tractable design procedure.

Building on the existing promising approaches, the present paper introduces a convex controller design approach for MIMO systems that explicitly accounts for both estimation error and spectrum sampling. To this end, we extend prior work \cite{bloemers2022frequency, bloemers2021frequency, spin2026iteratively} on iterative, convex MIMO controller design based on FRF data with \emph{integral quadratic constraints} (IQCs). Introduced in the seminal work of \cite{megretski1997system}, IQCs offer a convex alternative to circle-counting methods, and originate from the multiplier approach \cite{zames1966input, zames1966input2}. The IQC approach is central to concepts such as absolute stability \cite{khalilnonlinear, popov1962absolute}. 
However, verifying IQCs requires considering infinitely many frequency points. As a result, enforcing the corresponding conditions only on sampled frequency points prevents rigorous guarantees between samples. Our proposed methodology addresses this challenge by combining the IQC framework with \emph{convex generator sets}. IQCs provide a unified and convex description of input–output properties of uncertain operators, while generator sets offer a tractable representation of FRF uncertainty regions at discrete frequencies and allow interpolation across spectral intervals. Together, these ingredients enable the formulation of robust stability and performance guarantees on the basis of finitely many FRF samples, while preserving convexity in the synthesis problem.

The main contributions of this paper can be summarized as follows: 
\begin{itemize} 
	\item A novel relaxation approach of infinite-dimensional IQC conditions to finitely many \emph{linear matrix inequalities} (LMIs), achieved through convex generator sets that interpolate FRF uncertainty between sampled frequencies; 
	\item A convex, iterative structured MIMO controller design approach that explicitly incorporates both estimation errors and spectrum sampling, thereby extending existing FRF-based approaches to the MIMO setting; 
	\item Experimental validation of the proposed MIMO motion-control design on an industrial grade wire bonder, demonstrating that the proposed method enables robust and high-performance control in a real-life mechatronic application. 
\end{itemize}

The paper is organized as follows. Section~\ref{sec:probdef} presents the problem formulation. Section~\ref{sec:idealSol} reviews the necessary preliminaries on integral quadratic constraints (IQCs) and develops an ideal theoretical solution. Section~\ref{sec:robSol} contains the main result, namely, a robust sampled-data solution with stability and performance guarantees. The generator construction required by the main result is developed in Section~\ref{sec:genConstr} for both uncertain and stochastic FRF estimates. Section~\ref{sec:example} demonstrates the proposed approach on an industrial-grade wire bonder, and Section~\ref{sec:conclusion} concludes the paper.

\noindent
{\bf Notation:} The sets $\mb{C}, \mathbb{R}, \mb{Z}$, and $\mathbb{N}$ denote the set of complex numbers, real numbers, integers, and non-negative integers (including zero), respectively. Matrices with dimension $n \times m$ over the field $\mb{K}$ are denoted by $\mb{K}^{n \times m}$. The notation $\mb{K}^{\sbullet \times \sbullet}$ is used when the dimensions are not relevant. Real symmetric and complex Hermitian matrices are denoted by $\mb{S}^\sbullet$ and $\mb{H}^\sbullet$, respectively. The notation $A \succ 0 (\succeq 0)$ and $A \prec 0 (\preceq 0)$ denotes that $A \in \mb{H}^\sbullet$ is positive (semi)-definite and negative (semi)-definite, respectively. If $0 \prec A \in \mb{R}^{\sbullet \times \sbullet}$ we write $A \in \mb{S}_{\succ 0}^\sbullet$. Definiteness of Hermitian matrices is similarly defined. We denote by $\mb{D}, \partial \mb{D}, \bar{\mb{D}},$ and $\mb{E}$ the open unit disc, the unit circle, the closed unit disc, and the complement of the closed unit disc in $\mb{C}$, respectively. The set of rational matrix functions over the field $\mb{R}$ is denoted by $\mb{R}^{\sbullet \times \sbullet}(\xi)$, where $\xi$ is the indeterminate. Let $\mc{G}$ be a discrete-time LTI system for which we use $G(z)$ with $z\in\mb{C}$ to denote its transfer function. We write $G \in \mc{R}_\mathrm{p}^{\sbullet \times \sbullet}$ if $G(\xi) \in \mb{R}^{\sbullet \times \sbullet}(\xi)$ and $G$ is a (strictly) proper function. If $G \in \mc{R}_\mathrm{p}^{\sbullet \times \sbullet}$ is analytic on the open set $\mb{E}$ and bounded on $\partial \mb{D}$, i.e., $ \sup_{\omega\in[-\pi,\pi]}\bar{\sigma} (G(e^{j\omega}))$ is bounded with $j$ the imaginary unit and $\bar{\sigma}$ is the maximum singular value, we write $G \in \mc{RH}_{\infty}^{\sbullet \times \sbullet}(\mb{E})$. For two systems $G_1$ and $G_2$, we use $G_1 \star G_2$ to denote a Redheffer product between $G_1$ and $G_2$, see \cite{zhou1996robust}. For a transfer function $G(z)$ evaluated on $\Omega := [-\pi, \pi]$, we have $G^*(e^{j \omega}) = G^{\top}(e^{-j \omega})$ for all $\omega \in \Omega$, where $^\ast$ denotes the conjugate transpose and $^\top$ stands for the transpose. Discrete-time (rational) transfer matrices with bounded spectra are denoted as $G \in \mc{(R)L}_{\infty}^{\sbullet \times \sbullet} (\partial \mb{D})$. 
If it holds that $G(e^{j \omega}) \prec 0$ for all $\omega \in \Omega$, we write $G \prec 0$ on $\partial \mb{D}$. 
We denote by $\mathcal{L}_2$ the space of square integrable real signals, while $\mathcal{L}_{2\rm{e}}$ stands for its extensions in terms of signals that are square integrable over any finite interval.
For brevity, we write quadratic products of the form $x^{\!\top}\! A x$ as $(\sbullet)^{\!\top}\!\! A x$. For a matrix $A\in\mathbb{C}^{n\times m}$, $\mr{vec}(A)$ denotes a concatenation of the columns of $A$ in a row vector while $\otimes$ denotes the Kronecker product. 
Furthermore, $\mathbb{I}_{\tau_1}^{\tau_2}=\{ \tau \in\mathbb{Z} \mid \tau_1 \leq \tau \leq \tau_2 \}$ is an index set.

%% file: sections/ProblemFormulation.tex
\section{Problem definition} \label{sec:probdef}

The generalized plant concept is a systematic approach for formulating analysis and controller synthesis problems for LTI systems even under uncertainty \cite{zhou1996robust}. Through this concept, a wide range of system and control configurations can be described, where performance specifications are encoded with weighting filters in an intuitive manner. In this work, we also consider the generalized plant concept to formulate control configurations and desired performance specifications. An example of a generalized plant connected with a controller $K$ is depicted on the left in Figure~\ref{fig:3rob}.

Consider a nominal minimal LTI \emph{discrete-time} (DT) generalized-plant $P$, given by the state-space representation 
\begin{equation}
	P := \left\{
	\begin{aligned}
		x_{t+1} &= Ax_t + B_\mr{w} w_t + B_\mr{u} u_t,\\
		z_t   &= C_\mr{z} x_t + D_\mr{zw} w_t + D_\mr{z u} u_t,\\
		y_t   &= C_\mr{y} x_t + D_\mr{yw} w_t + D_\mr{yu} u_t,
	\end{aligned}
	\right. \label{eq:ssrepr}
\end{equation}
where $t\in\mb{Z}$ is discrete time, $x_t \in \mb{R}^{n_\mr{x}}$ is the state, $w_t \in \mb{R}^{n_\mr{w}}$ is the generalized disturbance, $z_t \in \mb{R}^{n_\mr{z}}$ is the generalized performance output, while $y_t \in \mb{R}^{n_\mr{y}}$ and   $u_t \in \mb{R}^{n_\mr{u}}$ are the measured output and actuated input available for the controller, all at discrete time $t \in \mb{Z}$, respectively.  The FRF of~\eqref{eq:ssrepr} is defined as
\begin{equation}
	P(e^{j \omega}) =
	\Matrix{c:c}{
		P_\mr{zw}(e^{j \omega}) & P_\mr{zu}(e^{j \omega}) \\ \hdashline
		P_\mr{yw}(e^{j \omega}) & P_\mr{yu}(e^{j \omega})
	},
	\quad \omega \in \Omega,
\end{equation}
where $P \in \mc{R}_\mr{p}^{(n_\mr{z}+n_\mr{y}) \times (n_\mr{w}+n_\mr{u})}$ is a proper real-rational transfer function associated with~\eqref{eq:ssrepr}:
\begin{equation}
	P(z) = zI \star
	\Matrix{c:cc}{
		A & B_\mr{w} & B_\mr{u} \\ \hdashline
		C_\mr{z} & D_\mr{zw} & D_\mr{zu} \\
		C_\mr{y} & D_\mr{yw} & D_\mr{yu}
	}.
\end{equation}
It is assumed that the generalized plant $P$ explicitly encodes the chosen control interconnection (e.g., feedback and feedforward structure) together with the selected exogenous and performance input--output channels. In addition, $P$ incorporates the weighting filters that define the desired performance specifications (e.g., mixed-sensitivity weights), such that $P$ fully specifies both the closed-loop architecture and the associated performance objectives.

\begin{figure}[t]
	\vspace*{-5pt}
	\centering
	\includegraphics[width=0.98\linewidth]{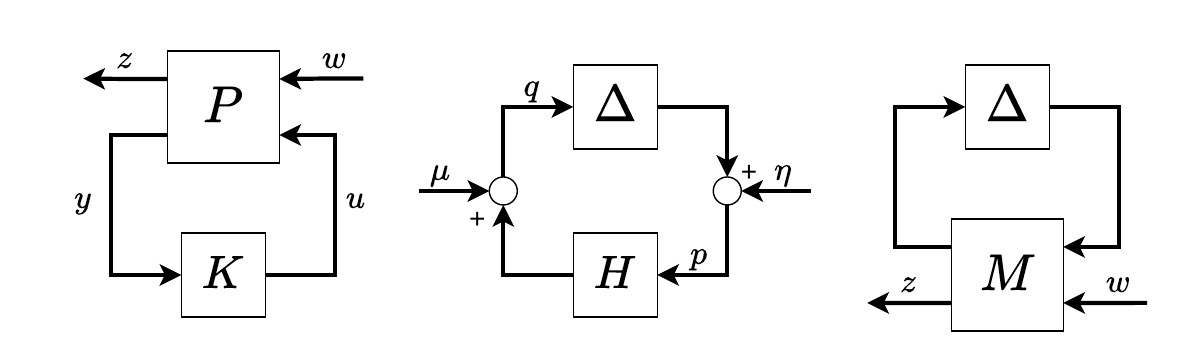}
	\vspace*{-5pt}
	\caption{Feedback interconnections used throughout the paper: Generalized plant $P$ and controller $K$ interconnection on the left, interconnection between an LTI system $H$ and an uncertainty $\Delta$ for analyzing robust stability in the middle, and interconnection between an LTI system $M$ and an uncertainty $\Delta$ for analyzing robust performance on the right.}
	\label{fig:3rob}
\end{figure}
The objective is that, based solely on FRF data\footnote{Let $G$ denote the open-loop plant to be controlled. Given FRF data $G(e^{j\omega})$, the generalized plant $P(e^{j\omega})$ is obtained by interconnecting $G(e^{j\omega})$ with the prescribed control configuration and weighting filters. Hence, $P(e^{j\omega})$ can be evaluated pointwise in frequency without requiring a parametric model of $G$.} of $P(e^{j \omega})$, design a fixed-structure MIMO feedback controller $K\in \mc{R}_\mr{p}^{ n_\mr{u} \times n_\mr{y}}$ such that the closed-loop transfer function from $w$ to $z$,
\begin{equation}
	T_\mr{zw} = P \star K := P_\mr{zw} + P_\mr{zu} K \left( I - P_\mr{yu} K \right)^{-1} P_\mr{yw},
\end{equation}
is proper, asymptotically stable, and meets the prescribed performance specifications.  
Correspondingly, the design objective can be formulated as
\begin{subequations} \label{eq:des}
	\begin{align} \label{eq:des_a}
		\underset{K}{\mr{minimize}} \quad & \|T_\mr{zw}\|_{\mc{H}_p}, \quad p \in \{2, \infty\}, \\
		\mr{subject \ to} \quad & T_\mr{zw} \in \mc{RH}_\infty^{n_\mr{z} \times n_\mr{w}},
	\end{align}
\end{subequations}
where the performance metrics in \eqref{eq:des_a} are given by
\begin{subequations}
	\begin{align}
		\|T_\mr{zw}\|_{\mc{H}_\infty} &= \sup_{\omega \in \Omega} \bar\sigma \big(T_\mr{zw}(e^{j \omega})\big), \\
		\|T_\mr{zw}\|_{\mc{H}_2}^2 &= \frac{1}{2 \pi} \int_{-\pi}^{\pi} \mathrm{Tr} \left( T_\mr{zw}^*(e^{j \omega}) \, T_\mr{zw}(e^{j \omega}) \right) \mathrm{d}\omega.
	\end{align}
\end{subequations}
Convex formulations for this problem are available in~\cite{schuchert2024data,bloemers2021frequency}.  
However, these methods result in semi-infinite \emph{semi-definite programs} (SDPs), since the SPDs must hold for all $\omega \in \Omega$, and exact knowledge of $P(e^{j \omega})$ is required.

In this work, we instead assume that only estimates $\widehat{P}(e^{j \omega})$ of $P(e^{j \omega})$ with uncertainty are available at a finite set of frequency points $\Omega_N \subset \Omega$, where $\Omega_N = \{ \omega_k \}_{k=0}^{N-1} \subset [0, \pi]$ and
\begin{equation}\label{eq:OmegaN}
	0 = \omega_0 < \omega_1 < \dots < \omega_{N-1} = \pi.
\end{equation}
At each $\omega_k \in \Omega_N$, we consider the possible realizations of $\widehat{P}(e^{j \omega})$ as $\widehat{\mathcal{P}}(e^{j \omega_k})\subset \mathbb{C}^{(n_\mathrm{z} + n_\mathrm{y}) \times (n_\mathrm{w} + n_\mathrm{u})}$ being a convex  complex set  such that
\begin{equation}\label{eq:estimatesFinPoints}
	P(e^{j \omega_k}) \in \widehat{\mathcal{P}}(e^{j \omega_k}), \quad \forall \, \omega_k \in \Omega_N, \ k\in\mathbb{I}_{0}^{N-1}.
\end{equation}
Note that \eqref{eq:estimatesFinPoints} is a rather mild and realistic assumption in practice. Indeed, in case of frequency domain indentification techniques aimed at estimation of FRFs, an ellipsoidal uncertainty region characterized by a confidence level and the covariance of the resulting estimates can be directly obtained by standard techniques \cite{pintelon2012system}, see also Section \ref{sec:stoc:gen}.

As pointed out in the introduction, the inter-sample behavior—i.e., the frequency response function in between the available frequency points—is largely neglected in the existing literature as it is a challenging problem to provide performance and stability guarantees in terms of \eqref{eq:des} for the true underlying system when only information in terms of $\{\widehat{\mathcal{P}}(e^{j \omega_k})\}_{k=0}^{N-1}$ at frequency points $\omega_k \in \Omega_N$ is available about the plant. The assumptions on the inter-sample behavior will be made explicit in Section~\ref{sec:robSol}, where they are constructed directly from the available finite and uncertain FRF data. This challenge motivates the following research question:

	\emph{How can robust, structured MIMO controllers be designed from an uncertain FRF estimate  $\{\widehat{\mathcal{P}}(e^{j \omega_k})\}_{k=0}^{N-1}$ of a real system at finite number of frequency points $\omega_k \in \Omega_N$, while ensuring performance and stability guarantees, also for the inter-frequency-sample behavior?}
	
In the sequel, we develop a convex robust synthesis approach to address this research question.

%% file: sections/SolutionApproach.tex
\section{Convex Synthesis Under Full and Exact Spectral Information} \label{sec:idealSol}
%
In order to develop our solution, first we introduce a convex synthesis method under the ideal assumption that the exact frequency response function $P(e^{j\omega})$ of the plant is known over the entire spectrum $\Omega$. This setting allows us to focus on the structural aspects of the synthesis procedure without considering practical limitations such as spectrum sampling or uncertainty in the FRF estimates yet. By temporarily setting aside computational tractability, we show how the synthesis problem can be formulated as a convex optimization program. This ideal formulation reveals the role of coprime factor parameterizations and enables the use of IQCs to characterize robust stability and performance. \vspace{-4mm}

\subsection{Preliminaries on Integral Quadratic Constraints}
Consider the interconnection between a discrete-time LTI system $H\in\mc{RH}_\infty^{n_\mr{q} \times n_\mr{p}}$ and a causal bounded operator $\Delta: \mathbb{R}^{n_\mr{q}} \rightarrow \mathbb{R}^{n_\mr{p}}$, depicted  in the middle of Figure~\ref{fig:3rob}. In general, $\Delta$ may be infinite-dimensional, nonlinear, and parametric/dynamic uncertainty. In the robust control framework~\cite{megretski1997system}, uncertainties are typically modeled via a class $\boldsymbol{\Delta}$ whose elements share prescribed properties, such as bounded gain, passivity, or structural constraints (e.g., block-diagonal form). The following definitions~\cite{veenman2016robust} are standard for such interconnections.
\begin{definition}[Well-posedness]
	The feedback interconnection $\Delta \star H$ is well-posed, if for each $\mr{col}(\mu, \eta) \in \mc{L}_{2\mr{e}}^\sbullet$ there exists a unique response $\mr{col}(q, p) \in \mc{L}_{2 \mr{e}}^\sbullet$ satisfying $q = H p + \mu$ and $p = \Delta(q)+\eta$ such that the map $\mr{col}(\mu, \eta) \rightarrow \mr{col}(q, p)$ is causal.
\end{definition}
\begin{definition}[Stability]
	The feedback interconnection $\Delta \star H$ is $\mc{L}_2$-stable if it is well-posed and if the induced $\mc{L}_2$-gain of the map  $\mr{col}(\mu, \eta) \rightarrow \mr{col}(q, p)$ is bounded.
\end{definition}
\begin{definition}[Robust stability]
	The feedback interconnection $\Delta \star H$ is robustly $\mc{L}_2$-stable if it is well-posed and $\mc{L}_2$-stable for all $\Delta \in \boldsymbol{\Delta}$.
\end{definition}
\begin{remark}
	In the special case where $\Delta$ is an finite-dimensional LTI system, the $\mc{L}_2$-stability notion coincides with asymptotic stability.
\end{remark}
Stability of the feedback interconnection $\Delta \star H$ is characterized directly in terms of input--output properties of the uncertainty class $\boldsymbol{\Delta}$. Rather than relying on a specific representation of $\Delta$, we employ the framework of Integral Quadratic Constraints (IQCs)~\cite{hu2017robustness} to capture these properties.
\begin{definition}[IQC]
	The DTFT  $\mr{col}(q, p) \in \mc{L}_{\infty}^{n_\mr{q} + n_\mr{p}}(\partial \mb{D})$ of the corresponding time signals satisfy the frequency-domain IQC, with respect to a so-called IQC-multiplier\footnote{In general, the IQC-multiplier $\Pi \in \mc{L}_\infty^{(n_\mr{q}+n_p) \times (n_\mr{q}+n_p)}(\partial \mb{D})$ can be any measurable essentially bounded self-adjoint Hermitian-valued function. } $\Pi(e^{j \omega}) \in \mb{H}^{n_\mr{q}+n_p}$, $\forall\omega \in \Omega$, if $\mc{I}(\Pi, q, p) \geq 0$ where
	\begin{equation*}
		\mc{I}(\Pi, q, p) := \int_{-\pi}^\pi \Matrix{c}{q(e^{j \omega}) \\ p(e^{j \omega})}^* \Pi(e^{j \omega}) \Matrix{c}{q(e^{j \omega}) \\ p(e^{j \omega})} \mr{d} \omega.
	\end{equation*}
\end{definition}
For reasons that will become apparent in the subsequent analysis, we also consider the case in which $0 \notin \boldsymbol{\Delta}$. This necessitates the use of a specialized class of IQC multipliers capable of handling such uncertainty sets, notably multipliers subject to inertia constraints; see, e.g.,~\cite{veenman2013stability}. The inertially constraint multiplier class is denoted as:
\begin{multline}\label{eq:inertiaMult}
	\boldsymbol{\Pi}_\mr{in} := \bigg\{ \Pi = \Pi^* \in \mc{L}_\infty^{(n_\mr{q}+n_\mr{p}) \times (n_\mr{q}+n_\mr{p})}(\partial \mb{D}) \ \mr{s.t.} \\ \Pi = \Matrix{cc}{\Pi_\mr{qq} & \Pi_\mr{q p} \\ \sbullet & \Pi_\mr{p p}}, \ \Pi_\mr{qq} \in \mb{H}^{n_\mr{q}}_{\succ 0}, \ \Pi_\mr{pp} \in \mb{H}^{n_\mr{p}}_{\preceq 0} \ \mr{on} \ \partial \mb{D} \bigg\}.
\end{multline}
For $\Pi \in \boldsymbol{\Pi}_\mr{in}$, we may also write $\Pi(e^{j \omega}) \in \mb{H}_\mr{in}^{(n_\mr{q}+n_\mr{p}) \times (n_\mr{q}+n_\mr{p})}$ for all $\omega \in \Omega$, where $\mb{H}_\mr{in}$ is implicitly defined through~\eqref{eq:inertiaMult}. This class of multipliers, also known as PN multipliers~\cite{seiler2010dissipation}, allows the standard IQC-based robust stability result~\cite{veenman2013stability} to be formulated in a relaxed form without resorting to the homotopy argument~\cite{megretski2001new}.

In general, an uncertainty class $\boldsymbol{\Delta}$ is characterized within the IQC framework by an associated multiplier class $\boldsymbol{\Pi}_\Delta$, whose elements encode the admissible input--output properties of all $\Delta\in\boldsymbol{\Delta}$; see~\cite{veenman2016robust} for an extensive catalog covering, among others, structured and unstructured uncertainties, passive nonlinearities, and bounded real systems. When $0 \notin \boldsymbol{\Delta}$, robust stability can be certified by restricting attention to multipliers in the intersection $\boldsymbol{\Pi}_\Delta \cap \boldsymbol{\Pi}_\mr{in}$. This leads to the following relaxed IQC-based robust stability result.
\begin{theorem}[Robust stability via IQCs]
	Let $H\! \in\! \mc{RH}_\infty^{\sbullet\! \times\! \sbullet}\!(\mb{E})$ and suppose the interconnection $\Delta \star H$ is well-posed for all $\Delta\in \boldsymbol{\Delta}$. If the IQC $\mc{I}(\Pi^\mr{s}, q, \Delta(q)) \geq 0$ is satisfied for all $\Delta \in \boldsymbol{\Delta}$ and all $\Pi^\mr{s} \in \boldsymbol{\Pi}_\Delta \cap \boldsymbol{\Pi}_\mr{in}$ and the \emph{frequency domain inequality} (FDI) 
	\begin{equation}
		\Matrix{c}{H \\ I}^* \Pi^\mr{s} \Matrix{c}{H \\ I} \prec 0 \quad \mr{on} \ \partial \mb{D}.
	\end{equation}
	holds for some $\Pi^\mr{s} \in \boldsymbol{\Pi}_\Delta \cap \boldsymbol{\Pi}_\mr{in}$, then the interconnection $\Delta \star H$ is robustly stable.
\end{theorem}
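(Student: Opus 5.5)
The plan is the standard IQC argument with inertia-constrained multipliers, which avoids the homotopy step of \cite{megretski1997system} in the spirit of \cite{veenman2013stability}. We first use the strict FDI to obtain a uniform bound that controls $q$ in terms of the external signals. We then combine this bound with the IQC satisfied by $\Delta$, and finally use the inertia of $\Pi^\mr{s}$ to deduce a finite $\mc{L}_2$-gain directly.

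\textbf{Step 1 (uniform FDI margin).} Since $H\in\mc{RH}_\infty(\mb{E})$ is rational and has no poles on $\partial\mb{D}$, and $\Pi^\mr{s}$ is essentially bounded, strictness of the FDI on the compact set $\Omega$ gives some $\epsilon>0$ with
\[
\Matrix{c}{H \\ I}^* \Pi^\mr{s} \Matrix{c}{H \\ I} \preceq -\epsilon I \quad \mr{on}\ \partial\mb{D}.
\]
If $\Pi^\mr{s}$ is merely measurable, we take this uniform version as the intended meaning of the strict FDI, as is customary.

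\textbf{Step 2 (dissipation-type inequality on finite horizons).} Fix $\Delta\in\boldsymbol{\Delta}$ and $\mr{col}(\mu,\eta)\in\mc{L}_2$. Let $\mr{col}(q,p)\in\mc{L}_{2\mr{e}}$ be the response guaranteed by well-posedness. Write $p=\Delta(q)+\eta$ and $q=Hp+\mu$. Because the signals are only in $\mc{L}_{2\mr{e}}$, we work with truncations $P_T$. The set $\{q_T\}$ lies in $\mc{L}_2$, and we apply the IQC to $\mr{col}(q_T,\Delta(q_T))$. Doing so requires the IQC to hold for all $\mc{L}_2$ inputs; we assume this, as is implicit in the statement. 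Set $\tilde p=\Delta(q)$, so that by causality $(\Delta(q))_T=(\Delta(q_T))_T$ up to the usual truncation argument. Expanding the FDI with the signal $\tilde p$ and shifting $\mu$ and $\eta$ into perturbation terms, we obtain, by Parseval,
\[
\epsilon\|\tilde p_T\|^2 \le -\mc{I}(\Pi^\mr{s},H\tilde p_T,\tilde p_T) \le \mc{I}(\Pi^\mr{s},q_T,\tilde p_T)-\mc{I}(\Pi^\mr{s},H\tilde p_T,\tilde p_T) + c\,(\cdots).
\]
The difference $\mc{I}(\Pi^\mr{s},q_T,\tilde p_T)-\mc{I}(\Pi^\mr{s},H\tilde p_T,\tilde p_T)$ is a quadratic form in $q_T-H\tilde p_T=H\eta+\mu$ (truncated, with causality of $H$ handled via the tail). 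It is bounded by $c_1\|\mathrm{col}(\mu,\eta)\|^2+c_2\|\mathrm{col}(\mu,\eta)\|\,\|\tilde p_T\|$, where the constants depend on $\|H\|_\infty$ and $\|\Pi^\mr{s}\|_\infty$. Completing the square yields $\|\tilde p_T\|\le \gamma\|\mr{col}(\mu,\eta)\|$ uniformly in $T$.

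\textbf{Step 3 (role of the inertia and conclusion).} The bound in Step 2 only controls $\tilde p$. In the classical proof, the homotopy argument is needed to guarantee that the closed-loop signals are in $\mc{L}_2$ at all. Here the inertia constraints $\Pi_\mr{qq}\succ0$ and $\Pi_\mr{pp}\preceq0$ substitute for it. With these constraints, the IQC applied to $\mr{col}(q_T,\tilde p_T)$ together with the FDI can be reorganized so that $\Pi_\mr{qq}\succ0$ yields a lower bound on $\|q_T\|^2$, while $\Pi_\mr{pp}\preceq0$ prevents the $p$-terms from destroying the sign. This is what makes the estimate close on truncations without first assuming $\mc{L}_2$ signals. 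Following \cite{veenman2013stability}, we use the PN structure to show that the finite-horizon inequality is legitimate for every $T$. Then $q=Hp+\mu$ and $p=\tilde p+\eta$ give $\|q_T\|,\|p_T\|\le\gamma'\|\mr{col}(\mu,\eta)\|$. Letting $T\to\infty$ shows bounded $\mc{L}_2$-gain, and since $\Delta$ was arbitrary, robust stability follows.

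\textbf{Main obstacle.} The main difficulty is Step 3. We must justify the use of the IQC on truncated signals, which is where causality of $\Delta$ enters, and we must show exactly how the inertia of $\Pi^\mr{s}$ replaces the homotopy argument. Our first attempt would be to reproduce the hard-factorization or PN-multiplier argument of \cite{seiler2010dissipation,veenman2013stability}, treating that cited result as the engine.
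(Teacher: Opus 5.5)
The paper does not prove this theorem. It quotes it from the IQC literature (\cite{veenman2013stability}, with PN multipliers as in \cite{seiler2010dissipation}). Your Steps~2--3 are essentially the right argument for a \emph{static} multiplier. In that case $\mc{I}$ is a sum over time, and the part of $H\tilde p_T$ after time $T$ contributes a nonnegative term through $\Pi^\mr{s}_\mr{qq}\succ 0$, so it can be dropped from the FDI. The part of $\Delta(q_T)$ after $T$ contributes a nonpositive term through $\Pi^\mr{s}_\mr{pp}\preceq 0$, so it can be dropped from the IQC. Completing squares then gives a bound uniform in $T$ with no homotopy.

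The gap is that $\Pi^\mr{s}$ is frequency dependent here, as are the ZOH multipliers used later, so the quadratic form couples $[0,T]$ with $(T,\infty)$. Precisely, $q_T-H\tilde p_T=(\mu+H\eta)_T-(I-P_T)H\tilde p_T$ and $\Delta(q_T)=\tilde p_T+(I-P_T)\Delta(q_T)$. These tails enter through indefinite cross terms with the signals on $[0,T]$. They are of size $O(\|\tilde p_T\|)$ and $O(\|q_T\|)$, not $O(\|\mr{col}(\mu,\eta)\|)$, and the diagonal sign conditions say nothing about them. Hence neither the hard IQC $\mc{I}(\Pi^\mr{s},q_T,\tilde p_T)\ge 0$ nor your bound $c_1\|\mr{col}(\mu,\eta)\|^2+c_2\|\mr{col}(\mu,\eta)\|\,\|\tilde p_T\|$ follows. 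Handing this step to \cite{veenman2013stability} therefore hands over the whole theorem. Dissipation-based proofs close it only through a factorization $\Pi^\mr{s}=\Psi^*M\Psi$ with stable $\Psi$ that converts the soft IQC into a hard one, together with a KYP storage function. That is a rational state-space setting, which the paper explicitly avoids.

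For general frequency-dependent multipliers, use the inertia constraint through the homotopy instead. For fixed $q\in\mc{L}_2$, the map $\tau\mapsto\mc{I}(\Pi^\mr{s},q,\tau\Delta(q))$ is concave because $\Pi^\mr{s}_\mr{pp}\preceq 0$. It is nonnegative at $\tau=0$ because $\Pi^\mr{s}_\mr{qq}\succ 0$, and nonnegative at $\tau=1$ by hypothesis. So every $\tau\Delta$ with $\tau\in[0,1]$ satisfies the IQC, even though $0\notin\boldsymbol{\Delta}$.

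Then run your Step~2 computation only on genuine $\mc{L}_2$ pairs, with no truncation. For $f\in\mc{L}_2$, combining $\mc{I}(\Pi^\mr{s},Hf,f)\le-\epsilon\|f\|^2$ with $\mc{I}(\Pi^\mr{s},Hf,\tau\Delta(Hf))\ge 0$ gives $\|f\|\le c_0\|f-\tau\Delta(Hf)\|$ with $c_0$ independent of $\tau$. This uniform bound lets you propagate $\mc{L}_2$-stability from $\tau=0$ (trivial, since $H$ is stable) to $\tau=1$ in steps of fixed length, as in \cite{megretski1997system}.

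This route needs well-posedness of $(\tau\Delta)\star H$ for every $\tau\in[0,1]$, whereas the statement grants it only at $\tau=1$, so you must add or derive it. In the LTI case $\Delta=\Delta_I=I$ used later in the paper it can be derived. The pointwise versions of the same two inequalities give $\det(I-\tau H)\neq 0$ on $\partial\mb{D}$ for all $\tau\in[0,1]$. So the winding number of $\det(I-H)$ equals that of the constant $\det(I)=1$, namely zero, and the argument principle yields $(I-H)^{-1}\in\mc{RH}_\infty(\mb{E})$.
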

To extend this result for robust performance, consider the feedback interconnection shown in Figure~\ref{fig:3rob}. Robust performance is analyzed with a performance multiplier class $\boldsymbol{\Pi}_\mr{p}$, see~\cite{veenman2016robust} for an extensive overview of performance metrics expressed using multiplier classes. In this paper, we focus on $\mc{H}_\infty$ and $\mc{H}_2$ performance which are given by the performance multipliers classes $\boldsymbol{\Pi}_{\mc{H}_\infty} := \{\mr{diag}(\gamma^{-1} I_{n_\mr{z}}, \ - \gamma I_{n_\mr{w}}) \ | \ \gamma \geq 0\}$ and $\boldsymbol{\Pi}_{\mc{H}_2} := \{\mr{diag}(I_{n_\mr{z}}, -Q(e^{j \omega})) \ | \ Q(e^{j \omega}) \in \mb{H}^{n_\mr{z}}_{\succ 0}$ for all $\omega \in \Omega \}$, respectively. The associated cost functions (to be minimized) are $J(\boldsymbol{\Pi}_{\mc{H}_\infty}) = \gamma$ and $J(\boldsymbol{\Pi}_{\mc{H}_2}) = \sqrt{\frac{1}{2 \pi} \int_{-\pi}^{\pi} \mr{Tr} \left(Q(e^{j \omega}) \right) \mr{d}\omega}$. To simultaneously certify robust stability and robust performance, i.e., a given performance bound for all possible realizations of the uncertainty, we combine the corresponding stability and performance multipliers into the augmented multiplier
\begin{equation}\label{eq:daug}
	\begin{aligned}
		\Pi
		=
		\mr{daug}(\Pi^\mr{s}, \Pi^\mr{p})
		:=
		\Matrix[2]{cc:cc}
			{\Pi_\mr{qq}^\mr{s} & 0 & \Pi_\mr{qp}^\mr{s} & 0 \\
				0 & \Pi_\mr{zz}^\mr{p} & 0 & \Pi_\mr{zw}^\mr{p} \\
				\hdashline
				(\Pi_\mr{qp}^\mr{s})^* & 0 & \Pi_\mr{pp}^\mr{s} & 0 \\
				0 & (\Pi_\mr{zw}^\mr{p})^* & 0 & \Pi_\mr{ww}^\mr{p}},
	\end{aligned}
\end{equation}
where $\Pi^\mr{s}\in\boldsymbol{\Pi}_\Delta$ and $\Pi^\mr{p}\in\boldsymbol{\Pi}_\mr{p}$. The following analysis result~\cite{veenman2016robust} provides a frequency-domain method to characterize both.
\begin{theorem}[Robust performance via IQCs]
	Let $M \in \mc{RH}_\infty^{\sbullet \times \sbullet}(\mb{E})$ and suppose the interconnection between $M$ and $\Delta$ is well-posed for all $\Delta\in \boldsymbol{\Delta}$. If the IQC  $\mc{I}(\Pi^\mr{s}, q, \Delta(q)) \geq 0$ is satisfied for all $\Delta \in \boldsymbol{\Delta}$ and all $\Pi^\mr{s} \in \boldsymbol{\Pi}_\Delta \cap \boldsymbol{\Pi}_\mr{in}$ and the FDI 
	\begin{equation}\label{eq:robstabperfFDI}
		\! \Matrix{cc}{M \\ I}^* \Pi \Matrix{c}{M \\ I} \prec 0 \ \mr{on} \ \partial \mb{D}.
	\end{equation}
	holds for some $\Pi^\mr{s} \in \boldsymbol{\Pi}_\Delta \cap \boldsymbol{\Pi}_\mr{in}$ and some $\Pi^\mr{p} \in \boldsymbol{\Pi}_\mr{p} \cap \boldsymbol{\Pi}_\mr{in}$, then the interconnection is robustly stable and satisfies robust performance.
\end{theorem}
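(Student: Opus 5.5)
The plan is to reduce the statement to the robust stability theorem stated above by introducing a fictitious performance uncertainty that closes the loop from $z$ to $w$. Partition $M$ according to $\mr{col}(q,z) = M\,\mr{col}(p,w)$. Consider the augmented uncertainty $\Delta_\mr{e} := \mr{diag}(\Delta, \Delta_\mr{p})$, where $\Delta_\mr{p}$ ranges over the LTI operators $w = \Delta_\mr{p} z$ satisfying $\mc{I}(\Pi^\mr{p}, z, \Delta_\mr{p} z) \geq 0$. For $\mc{H}_\infty$ these are the operators with gain at most $\gamma^{-1}$ after the scaling encoded in $\mr{diag}(\gamma^{-1}I,-\gamma I)$; for $\mc{H}_2$ the class is defined analogously with $Q$. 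Because $\Pi = \mr{daug}(\Pi^\mr{s},\Pi^\mr{p})$ merely interlaces the blocks, the IQC for $\Delta_\mr{e}$ with multiplier $\Pi$ is the sum of the two individual IQCs. It therefore holds for all admissible $\Delta_\mr{e}$. Moreover, $\Pi \in \boldsymbol{\Pi}_\mr{in}$ follows from $\Pi^\mr{s},\Pi^\mr{p}\in\boldsymbol{\Pi}_\mr{in}$, since the $(1,1)$ and $(2,2)$ blocks after permutation are block-diagonal with definite diagonal blocks.

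\emph{Robust stability.} The FDI \eqref{eq:robstabperfFDI} is exactly the robust stability FDI for $H=M$ with multiplier $\Pi$. Hence the robust stability theorem gives stability of $\Delta_\mr{e}\star M$, and in particular (taking $\Delta_\mr{p}=0$, or simply restricting to the $(q,p)$ loop) stability of $\Delta\star M$. One needs well-posedness of the augmented loop. I would obtain it by scaling: replace $\Delta_\mr{p}$ by $\tau\Delta_\mr{p}$ with $\tau\in[0,1]$, and note that the strict FDI bounds the loop gain uniformly. This is the standard homotopy-free argument, which inertia multipliers make available.

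\emph{Performance.} Instead of closing the performance loop, I would argue directly. By strictness and compactness of $\partial\mb{D}$, there exists $\epsilon>0$ such that $(\sbullet)^*\Pi\,\mr{col}(M,I)\preceq -\epsilon I$. Take any $w\in\mc{L}_2$ with zero initial conditions. Robust stability ensures that $q,p,z\in\mc{L}_2$ and that their DTFTs satisfy $\mr{col}(q,z)=M\,\mr{col}(p,w)$ pointwise. Multiply the FDI from both sides by $\mr{col}(p,w)$ and integrate. This yields $\mc{I}(\Pi^\mr{s},q,p)+\mc{I}(\Pi^\mr{p},z,w) \le -\epsilon\|\mr{col}(p,w)\|^2$. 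Since the first term is nonnegative by the uncertainty IQC, it follows that $\mc{I}(\Pi^\mr{p},z,w)<0$ for $w\neq0$. For $\mc{H}_\infty$ this reads $\gamma^{-1}\|z\|^2<\gamma\|w\|^2$, i.e.\ an $\mc{L}_2$-gain bound of $\gamma$ uniformly over $\boldsymbol{\Delta}$. For $\mc{H}_2$, when the closed loop is LTI, I would take $w$ to be impulses, or equivalently evaluate the FDI pointwise and take the trace, to obtain $\|T_\mr{zw}\|_{\mc{H}_2}^2\le J(\Pi^\mr{p})^2$.

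\emph{Main obstacle.} I expect the hardest step to be the rigorous passage from the frequency-domain integral inequality to well-posedness and stability for a possibly nonlinear $\Delta$. There the DTFT of $p=\Delta(q)$ is only known to exist once $\mc{L}_2$-stability is established. I would handle this by invoking the robust stability theorem as a black box for the $(q,p)$ loop. The performance step then only requires signals already known to lie in $\mc{L}_2$, so that Parseval's identity applies.
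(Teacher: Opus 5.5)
The paper does not prove this statement; it imports it from \cite{veenman2016robust}. Your argument is in substance the standard one for results of this type. You certify stability through the uncertainty channel, integrate the FDI along closed-loop $\mc{L}_2$ trajectories, split the quadratic form of $\Pi=\mr{daug}(\Pi^\mr{s},\Pi^\mr{p})$ into its $(q,p)$ and $(z,w)$ parts, and discard the nonnegative uncertainty IQC. The performance step is fine. That includes the impulse argument for $\mc{H}_2$ in the LTI case. The ``pointwise trace'' variant, by contrast, needs the uncertainty inequality to hold pointwise in $\omega$. That is true for $\Delta_I$ by definition of $\boldsymbol{\Pi}_I$, but not for a general LTI $\Delta$ with only an integral IQC.

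The step that is wrong as written is the well-posedness of the augmented loop $\mr{diag}(\Delta,\Delta_\mr{p})\star M$ for a general $\Delta_\mr{p}$.
\begin{itemize}
\item A strict FDI only bounds solutions that already exist. It gives no existence, uniqueness or causality, which is why well-posedness is a \emph{hypothesis} of both IQC theorems.
\item Scaling $\Delta_\mr{p}\mapsto\tau\Delta_\mr{p}$ is itself the homotopy argument, not a homotopy-free one. There, too, well-posedness along the path is assumed rather than derived.
\item Nor is it inherited from the assumed well-posedness of $\Delta\star M$. Write $D_{\mr{qp}},D_{\mr{qw}},D_{\mr{zp}},D_{\mr{zw}}$ for the feedthrough blocks of $M$ and $D_\mr{p}$ for that of $\Delta_\mr{p}$. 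The instantaneous loop seen by a nonlinear $\Delta$ then changes from $D_{\mr{qp}}$ to $D_{\mr{qp}}+D_{\mr{qw}}(I-D_\mr{p}D_{\mr{zw}})^{-1}D_\mr{p}D_{\mr{zp}}$. Unique causal solvability of $p=\Delta(q)$ need not survive this change.
\end{itemize}

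You never need a nonzero $\Delta_\mr{p}$, so the gap is easily closed. Take $\boldsymbol{\Delta}_\mr{p}=\{0\}$. The augmented IQC then holds for every admissible $\Pi^\mr{p}$, because $\Pi^\mr{p}_{\mr{zz}}\succ0$. Well-posedness of $\mr{diag}(\Delta,0)\star M$ is exactly the stated hypothesis. Equivalently, evaluate the FDI at $\mr{col}(p,0)$ and drop the term $(M_{\mr{zp}}p)^*\Pi^\mr{p}_{\mr{zz}}M_{\mr{zp}}p\ge0$. This gives the stability FDI for $H=M_{\mr{qp}}$, and $w$ then enters only as an extra $\mc{L}_2$ input through the stable blocks $M_{\mr{qw}}$ and $M_{\mr{zw}}$.

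A minor point: the uniform margin $\epsilon$ obtained ``by compactness'' requires $\Pi$ to be continuous. That is not assumed here; $\Pi\in\mc{L}_\infty$, and the ZOH multipliers in Theorem~\ref{thm:RigGuaran} are discontinuous. You do not need $\epsilon$. Pointwise negativity already gives $\mc{I}(\Pi^\mr{p},z,w)<0$ for every nonzero $w\in\mc{L}_2$, hence an $\mc{L}_2$-gain bound of $\gamma$.
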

\begin{remark}
	Most results hold also for continuous-time systems by small changes. Moreover, we do not assume that the multipliers are rational, which reduces conservatism compared to state-space approaches where this assumption is needed.
\end{remark}
The IQC framework thus provides a rigorous and general foundation for analyzing the robust stability and performance of uncertain systems. Next, we show how it can be used for direct frequency-domain synthesis based on $P(e^{j\omega})$ without an analytic model of the plant.  
\vspace{-4mm}
\subsection{Frequency-domain IQC synthesis method}

Here, we provide a further generalization of the approach in~\cite{bloemers2021frequency} for the discrete-time case and for general-plant interconnections. 
We begin by parametrizing the to-be-designed controller $K$ using a \emph{right coprime factorization} 
(rcf) in terms of the pair $(V, U)$. The pair $(V, U)$ is a rcf of $K$ over $\mc{RH}_\infty(\mb{E})$ if both $V, U \in \mc{RH}_\infty^{\sbullet \times \sbullet}(\mb{E})$, $V$ is invertible, $K = UV^{-1}$, and there exist $X, Y \in \mc{RH}_\infty^{\sbullet \times \sbullet}(\mb{E})$ such that $XV + YU = I$. For further details see~\cite{vidyasagar2022control}. The FRFs of the coprime pair $(V, U)$ can be parametrized as
\begin{subequations}
	\begin{align}
		V(e^{j \omega}) &= C_V \phi(e^{j\omega}) + D_V, \\
		U(e^{j \omega}) &= C_U \phi(e^{j\omega}) + D_U,
	\end{align}
\end{subequations}
where $\phi(e^{j \omega}) := \left( e^{j \omega} I - A_\phi \right)^{-1} B_\phi \in \mc{RH}_{\infty}^{n_\phi \times n_\mr{y}}$, $V \in \mc{RH}_\infty^{n_\mr{y} \times n_\mr{y}}$, and $U \in \mc{RH}_\infty^{n_\mr{u} \times n_\mr{y}}$. The state-space matrices $(A_\phi, B_\phi)$ can be chosen based on the realization of rational \emph{Generalized Orthogonal Basis Functions} (GOBF)~\cite{van2000modelling}, representing stable, orthogonal bases. The pair $(A_\phi, B_\phi)$ is fixed and therefore, the FRFs of the pair $(V, U)$ are affine functions of the optimization variables $\theta := \mathrm{vec}(C_V, C_U, D_V, D_U)$ with $\theta\in\Theta=\mathbb{R}^{d}$, $d = (n_\phi + n_\mr{y})(n_\mr{u}+n_\mr{y})$, which is essential for formulating convex optimization programs. For a given $\theta$, the controller $K$ can be realized as
\begin{equation} \label{eq:controllerRealization}
	K(z) = zI \star \Matrix{c:c}{A_\phi - B_\phi D_V^{-1} C_V & B_\phi D_V^{-1} \\
		\hdashline
		\noalign{\vskip 0.5ex} 
		C_U - D_U D_V^{-1} C_V & D_U D_V^{-1}}.
\end{equation}
Additional sparsity in the elements of $K$, e.g., a distributed controller parametrization, can be incorporated by adding constraints on the optimization variables $\theta$, see~\cite{schuchert2024data}. 

To illustrate the structure of the closed-loop system under this parametrization, we consider a lcf  
of $P_\mr{yu} = D^{-1} N$ in combination with the 
rcf parametrization of the controller. This leads to the closed-loop description
\begin{equation} \label{eq:Mdef}
	T_\mr{zw} := P \star K = P_\mr{zw} + P_\mr{zu} U \Lambda^{-1} D P_\mr{yw},
\end{equation}
where $\Lambda := DV - NU$ and $P$ is the generalized plant. 
Using the coprime factors, the closed-loop system can be equivalently expressed as the \emph{linear fractional transformation} (LFT)
\begin{equation}\label{eq:CLasIQC}
	T_\mr{zw} = \Delta_I \star \underbrace{\Matrix{cc}{I-\Lambda\ &\ D P_\mr{yw} \\ P_\mr{zu} U\ &\ P_\mr{zw}}}_{=:M},
\end{equation}
where $\Delta_I := I$.
We explicitly use the notation $\Delta_I$ to emphasize that the identity can be interpreted as a special virtual uncertainty from a set $\boldsymbol{\Delta}_I$ containing a single element. To certify the robust stability of the interconnection~\eqref{eq:CLasIQC} using IQC analysis, an associated multiplier class for the virtual uncertainty set is required. Following~\cite{bloemers2021frequency}, the associated multiplier class is given by
\begin{equation}
	\boldsymbol{\Pi}_I :=
	\left\{
	\Pi \in \boldsymbol{\Pi}_\mr{in}
	\ \bigg| \
	\Matrix{c}{I\\I}^{\!*}
	\Pi
	\Matrix{c}{I\\I}
	\succeq 0
	\quad
	\text{on } \partial\mathbb{D}
	\right\}.
\end{equation}
%
%
By construction, we have $M \in \mc{RH}_{\infty}^{\sbullet \times \sbullet}(\mb{E})$ which depends affinely on the controller parameters $\theta$. Given a known right coprime factorization $(V, U)$ of a stabilizing controller $K$, e.g., a fixed $\theta$ is available, the following optimization program can be used to compute an IQC multiplier $\Pi$:
\begin{equation}\label{eq:MultIQCite}
\!\!\!\!\!\!	\begin{aligned}
		&\underset{\Pi^\mr{s} \in \boldsymbol{\Pi}_I, \ \Pi^{\mr{p}} \in \boldsymbol{\Pi}_\mr{{p}}}{\mathrm{minimize}} & & \ \ \ \ \ J(\boldsymbol{\Pi}_\mr{{p}}), \\
		&  \! \ \ \ \ \mr{subject \ to} & & \ \! \Matrix{c}{M \\ I}^* \Pi \Matrix{c}{M \\ I} \prec 0 \ \mr{on} \ \partial \mb{D},
	\end{aligned}
\end{equation}
where the IQC multiplier is given by $\Pi = \mr{daug}(\Pi^\mr{s}, \Pi^\mr{p})$. Let this augmented multiplier be partitioned as
\begin{equation}
	\begin{aligned}
		\Pi = \Matrix{c:c}{\Pi_{11} & \Pi_{12} \\ \hdashline \Pi_{12}^* & \Pi_{22}}. 
	\end{aligned}
\end{equation}
Using Schur's complement, the FDI in the constraint of optimization program~\eqref{eq:MultIQCite} can be equivalently expressed as
\begin{equation}\label{eq:SchurFDIConstraints}
	\Matrix{cc}{\Pi_{22}+M^* \Pi_{12} + \Pi^*_{12} M & M^* \\ M & -\Pi_{11}^{-1}} \prec 0 \ \mr{on} \ \partial \mb{D}.
\end{equation} 
Given a known stability multiplier $\Pi^\mr{s}$, the following optimization program can be solved over the controller parameters $\theta$:
\begin{equation}\label{eq:ContIQCite}
	\begin{aligned}
		\underset{\theta\in\Theta,\ \Pi^{\mr p}\in\boldsymbol{\Pi}_{\mr p}}{\mr{minimize}}
		\quad & J\bigl(\boldsymbol{\Pi}_{\mr p}\bigr),\\
		\mr{subject\ to}
		\quad & \eqref{eq:SchurFDIConstraints}.
	\end{aligned}
\end{equation}
With these convex optimization programs, the controller synthesis problem can be solved by using Algorithm~\ref{alg:idealAlg}. This algorithm builds upon the frequency-domain framework introduced in~\cite{bloemers2021frequency}, extending it to the discrete-time case and accommodating general $\mc{H}_\infty$ and $\mc{H}_2$ performance specifications while allowing for a more flexible generalized plant formulation compared to the 4-block case treated in ~\cite{bloemers2021frequency}. 
\begin{algorithm}[t]
	\caption{Iterative IQC-Based Controller Synthesis}\label{alg:idealAlg}
	\begin{algorithmic}[1]
		\Require Initial stabilizing $K_0 = U_0 V_0^{-1}$, given by $\theta_0\in\Theta$
		\State {\bf solve} SDP \eqref{eq:MultIQCite} under fixed $\theta_{0}$, giving $\Pi_{0}$
		\State {\bf set} $J^* \gets J(\Pi_{\mathcal{H}_p})$ with $\Pi_{0},\theta_{0}$, $i\leftarrow 1$
		\Repeat
		\State {\bf solve} SDP \eqref{eq:ContIQCite} under fixed $\Pi_{i-1}$, giving $\theta_{i}$
		\State {\bf solve} SDP \eqref{eq:MultIQCite} under fixed $\theta_{i}$, giving $\Pi_{i}$
	        \State {\bf update}  $J^* \gets J(\Pi_{\mathcal{H}_p})$ with $\Pi_{i},\theta_{i}$
		\State {\bf update} controller $K \gets U_iV_i^{-1}$ based on $\theta_i$
		\Until{$J^*$ converged}
	\Ensure $K$, ensuring robust stability and performance specs 
	\end{algorithmic}
\end{algorithm}


To conclude, Algorithm~\ref{alg:idealAlg} provides a convex and structurally transparent synthesis procedure, yet it requires exact knowledge of the FRF $P(e^{j\omega})$ across the full spectrum $\Omega$, an assumption rarely satisfied in practice. Moreover, if such an FRF would be available, still the enforcement of constraints at infinitely many frequency points renders the method computationally infeasible. The next section addresses both issues by introducing a finite-data synthesis framework that explicitly incorporates uncertainty and spectrum sampling.

%% file: sections/MainResult.tex
\section{Convex Synthesis Under Sampled and Uncertain Spectral Information}\label{sec:robSol}

Based on the developments in Section~\ref{sec:idealSol}, this section presents the main theoretical contribution of the paper: a novel methodology that resolves the computational infeasibility inherent in frequency-domain IQC analysis, which ordinarily requires constraints to hold at infinitely many frequency points. Specifically, the result shows that robust stability and performance guarantees can be certified from finitely many FRF estimates, while simultaneously accounting for both the uncertainty at sampled frequencies and the inter-frequency sample behavior. Later in this section, we demonstrate how this result can be used to formulate a convex and computationally tractable controller synthesis procedure. 

\vspace{-2mm}
\subsection{Finite-dimensional IQC Certification}

The generalized plant is represented by the unknown frequency response $P(e^{j\omega}), \ \omega\in\Omega.$ At the finite set of measured frequencies $\Omega_N$ satisfying~\eqref{eq:OmegaN}, we denote by
\begin{equation}
	\widehat{P}_k=
	\Matrix{cc}{
		\widehat{P}_{\mathrm{zw},k} &
		\widehat{P}_{\mathrm{zu},k} \\
		\widehat{P}_{\mathrm{yw},k} &
		\widehat{P}_{\mathrm{yu},k}
	},
	\qquad
	k\in\mathbb{I}_0^{N-1},
\end{equation}
the mean (or nominal) estimate of the generalized plant, where we use the shorthand notation $\widehat P_k:=\widehat P(e^{j\omega_k})$. Associated with each estimate is a convex uncertainty region
\begin{equation}
	\widehat{\mathcal P}_k=
	\Matrix{cc}{
		\widehat{\mathcal P}_{\mathrm{zw},k} &
		\widehat{\mathcal P}_{\mathrm{zu},k}\\
		\widehat{\mathcal P}_{\mathrm{yw},k} &
		\widehat{\mathcal P}_{\mathrm{yu},k}
	},
\end{equation}
whose individual blocks are defined by
\begin{equation}
	\label{eq:Pblocks}
	\begin{aligned}
		\widehat{\mathcal P}_{\mathrm{zw},k}
		&=
		\operatorname{co}(\Gamma_{P,k}^{\mathrm{zw}}),
		&
		\widehat{\mathcal P}_{\mathrm{zu},k}
		&=
		\operatorname{co}(\Gamma_{P,k}^{\mathrm{zu}}),
		\\
		\widehat{\mathcal P}_{\mathrm{yw},k}
		&=
		\operatorname{co}(\Gamma_{P,k}^{\mathrm{yw}}),
		&
		\widehat{\mathcal P}_{\mathrm{yu},k}
		&=
		\operatorname{co}(\Gamma_{P,k}^{\mathrm{yu}}),
	\end{aligned}
\end{equation}
where $\Gamma_{P,k}^{\mathrm{zw}},\ldots,\Gamma_{P,k}^{\mathrm{yu}}$ denote sets of independent matrix-valued generators describing the uncertainty at frequency $\omega_k$. The construction of these generators for both uncertain and stochastic FRF estimates is discussed in the subsequent section, while Appendix~\ref{app:convex} reviews the required notions of convex sets and independent generators.
%


Similar to the linearly interpolated approach of~\cite{schuchert2024robust} for SISO uncertainties, we  require interpolation operators before presenting the main theorem. Toward definining these operators, we first formally define piecewise convexity of set-valued maps, which requires the partition $\Omega_N$ of $\Omega$, satisfying the assumptions in~\eqref{eq:OmegaN}. The frequency intervals associated to this partition are given by $\mc{I}_{k|k+1}:=[\omega_k, \omega_{k+1}]$.
\begin{definition}
	A set-valued map $\ms{S}: \Omega \mapsto \mb{K}^{n \times m}$ is said to be \emph{piecewise convex on $\Omega$ with respect to the partition} $\Omega_N$ if, for every interval $\mc{I}_{k|k+1}$, there exist polytopes $\mc{S}_{k},\mc{S}_{k+1} \subseteq \mb{K}^{n \times m}$ such that
	\begin{equation}
		\ms{S}(\omega)
		=
		(1-\lambda_\omega)\mc{S}_k
		+
		\lambda_\omega\mc{S}_{k+1},
	\end{equation}
	for $\omega\in\mc{I}_{k|k+1}$ and $k\in\mb{I}_0^{N-2}$.
\end{definition}

A piecewise convex extension of the estimated FRF of the generalized plant, given by~\eqref{eq:Pblocks}, is given by introducing a linear interpolation operator, defined as:
\begin{multline} 
	\ms{L}(\widehat{\mc{P}})(\omega) := \Bigl\{ (1-\lambda_\omega) \widetilde P_k + \lambda_\omega \widetilde P_{k+1} \;\Big|\; \widetilde P_k \in \widehat{\mathcal{P}}_k, \\ \widetilde P_{k+1} \in \widehat{\mathcal{P}}_{k+1}\Bigr\}, 
\end{multline}
for $\omega \in \mathcal{I}_{k|k+1}, \ k\in\mathbb{I}_0^{N-2},$ with the interpolation parameter given by
\begin{equation*}
	\lambda_\omega = \frac{\omega - \omega_k}{\omega_{k+1} - \omega_k}.
\end{equation*}
Here, the dependence of the interval $\mc{I}_{k|k+1}$ on the interpolation variable $\lambda_\omega$ is implicit.
Based on this construction, we introduce the set
\begin{equation}\label{eq:uncertainGenPlantSystemClass}
	\ms{P} := \Bigl\{ P \in \mc{R}_p^{\sbullet \times \sbullet} \;\big|\; P(e^{j \omega}) \in \ms{L}(\widehat{\mc{P}})(\omega), \ \forall \omega \in \Omega \Bigr\},
\end{equation}
which characterizes all generalized plants that must be stabilized by the controller $K$. The true generalized plant is assumed to satisfy $P \in \ms{P}$. For compactness, let $\ms{P}(\omega)$ denote the, generally nonconvex, family of sets $\{ P(e^{j \omega}) \mid P \in \ms{P}\}$ for $\omega\in\Omega$. The convex sets $\widehat{\mathcal{P}}(e^{j \omega_k})$, the family of sets $\ms{P}(\omega)$, and the interpolated sets $\ms{L}(\widehat{\mc{P}})(\omega)$ are depicted in Figure~\ref{fig:LinInterpPolytopicOverApprox}. 
\begin{figure}[t!]
	\centering
	\includegraphics[width=0.45\textwidth]{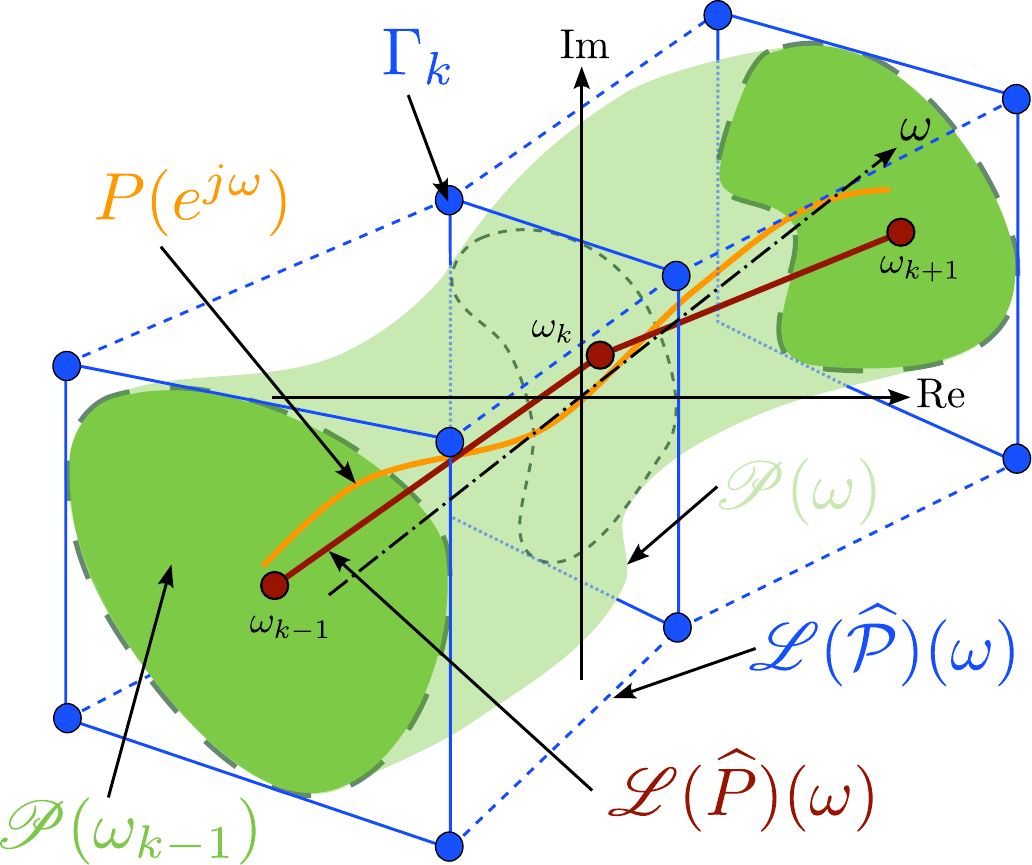}
	\caption{Conceptual figure showing the piecewise convex uncertain plant description. Real plant is given in orange, the mean estimates and their linear interpolation is given in dark red, the blue lines denote the piecewise convex region and the blue bullets the generators, the light green region denotes the uncertain FRF of the system and dark green denotes the uncertainty at measured frequency points $\omega_k$.} 
	\label{fig:LinInterpPolytopicOverApprox}
\end{figure}

Before presenting the main theorem, we introduce an interpolation operator that defines a convex region on each frequency interval $\mathcal{I}_{k|k+1} := [\omega_k, \omega_{k+1}]$, $k \in \mathbb{I}_0^{N-2}$, without enforcing continuity at the boundaries $\omega_k$. In particular, the left and right limits of $\omega$ approaching $\omega_k$ are not required to coincide. Such continuity at the boundary points is inherent to the previously defined linear interpolation operator $\mathscr{L}(\bullet)$; however, as will be shown in the subsequent section, constructing an analogous object to $M(e^{j\omega})$ in~\eqref{eq:CLasIQC} for the generalized plant $\mathscr{L}(\widehat{\mathcal{P}})(\omega)$ necessitates this more general notion. The inherent discontinuities at the boundary frequencies arise from sampling the coprime factors $(V,U)$ of the controller as will become evident in the subsequent section. 

Anticipating this, we introduce an interpolation concept that explicitly accommodates boundary discontinuities while preserving convexity on each interval. 
Specifically, each sampled frequency point $\omega_k \in \Omega_N$ belongs to both intervals $\mc{I}_{k-1|k}$ and $\mc{I}_{k|k+1}$. The definition of piecewise convexity, however, does not require the convex sets associated with these intervals to coincide at $\omega_k$. Consequently, different generator sets may be used depending on the interval under consideration. We therefore denote by $\Gamma_{\mc{S},k}^-$ the generator set at $\omega_k$ used in $\mathcal{I}_{k-1|k}$, and by $\Gamma_{\mc{S},k}^+$ the one used in $\mathcal{I}_{k|k+1}$. To accommodate such boundary discontinuities while maintaining convexity within each interval, we introduce the \emph{piecewise convex operator}
\begin{multline}
	\mathscr{C}(\mc{S})(\omega)
	:= \Bigl\{ (1-\lambda_\omega)S_k + \lambda_\omega S_{k+1} \;\Big|\;
	S_k \in \mathrm{co}(\Gamma_{\mc{S},k}^+),\\
	S_{k+1} \in \mathrm{co}(\Gamma_{\mc{S},{k+1}}^-) \Bigr\},
\end{multline}
for $\omega \in \mathcal{I}_{k|k+1}$ and $k \in \mathbb{I}_0^{N-2}$. At the boundary frequencies $\omega_k$, the values obtained from the left and right intervals may differ, i.e., $\mathscr{C}(\mc{S})(\omega_k^-) \neq \mathscr{C}(\mc{S})(\omega_k^+)$. This discontinuity is intentional and reflects that the generator sets $\Gamma_{\mc{S},k}^-$ and $\Gamma_{\mc{S},k}^+$ are not required to coincide—only that each interval remains convex. If the generator sets coincide at each sample, i.e., $\Gamma_{\mc{S},k}^- = \Gamma_{\mc{S},k}^+ := \Gamma_{\mc{S},k}$, the piecewise convex operator $\mathscr{C}(\mc{S})(\omega)$ reduces to the linear interpolation operator $\mathscr{L}(\mc{S})(\omega)$.

Next, recall the definition of $M$ in terms of \eqref{eq:CLasIQC} w.r.t. a given $P_\mr{zw}, \ldots,  P_\mr{yu}$ and $(V, U)$, the rcf parameterization of $K$. Then, $\widehat{M}_k$ can be similarly defined for  $\widehat{P}_{\mr{zw},k}, \ldots,  \widehat{P}_{\mr{yu},k}$ belonging to $\widehat{\mathcal{P}}_{\mr{zw},k}, \ldots,  \widehat{\mathcal{P}}_{\mr{yu},;}$ for a given single $(V, U)$, and such $\widehat{M}_k$ can be collected in  $\widehat{\mathcal{M}}_k$ similar to $ \widehat{\mathcal{P}}_k$. See Section \ref{Sec:constM} for the detailed construction. With these definitions in place, we are now ready to present the main theorem.
\begin{theorem}[Robust inter-sample synthesis]\label{thm:RigGuaran}
	Let $\ms{C}(\widehat{M})(\omega)$ denote the piecewise convex set defined by the generator sets $\Gamma_{M,k}^{-}$ and $\Gamma_{M,k}^{+}$ at each frequency point $\omega_k$ for $k \in \mathbb{I}_0^{N-1}$. Suppose the following inclusion holds:
	\begin{equation}\label{eq:DeltaI}
	\ms{P} \star K \subseteq \Delta_I \star \ms{M},
	\end{equation}
	where $\ms{P}$ is the uncertain plant class as defined in equation~\eqref{eq:uncertainGenPlantSystemClass}, $K$ is a given controller such that the interconnection $P \star K$ is well-posed for all $P \in \ms{P}$, and $\ms{M}$ is the uncertain system class defined as
	\begin{equation}\label{eq:RobM}
		\!\!\!\!\! \ms{M} = \left\{ M \in \mc{RH}_\infty^{\sbullet \times \sbullet}(\mb{E}) \;\middle|\; M(e^{j\omega}) \in \ms{C}(\widehat{M})(\omega) \right\},
	\end{equation}
	for $\omega \in \Omega$, such that the interconnection between every $M \in \ms{M}$ and every $\Delta_I \in \boldsymbol{\Delta}_I$ is well-posed. If
	\begin{itemize}
		\item there exist matrices $\Pi^\mathrm{s}_{k|k+1} \in \mathbb{H}_\mathrm{in}^{\sbullet \times \sbullet}$ and $\Pi^\mathrm{p}_{k|k+1} \in \mathbb{H}_\mathrm{in}^{\sbullet \times \sbullet}$ such that the following finite set of LMIs is satisfied:
		\begin{equation}\label{eq:finiteIQCFDIPerf}
			\Matrix{c}{\gamma_M \\ I}^*
			\Pi_{k|k+1} \Matrix{c}{\gamma_M \\ I} \prec 0,
		\end{equation}
		for all $\gamma_M \in \Gamma_{M,k}^+ \cup \Gamma_{M,{k+1}}^-$ with $k \in \mathbb{I}_0^{N-2}$ and
		$\Pi_{k|k+1} = \mr{daug}(\Pi_{k|k+1}^\mr{s}, \Pi_{k|k+1}^\mr{p})$;
		\item the \emph{zero-order-hold} (ZOH) operator applied to the collections $\{\Pi^\mathrm{s}_{k|k+1}\}_{k=0}^{N-2}$ and $\{\Pi^\mathrm{p}_{k|k+1}\}_{k=0}^{N-2}$, defined as
		\begin{equation*}
			\ms{Z}(\Pi)(\omega) := \Bigl\{ \Pi_{k|k+1} \;\Big|\; \forall \omega \in \mc{I}_{k|k+1}, \ k \in\mathbb{I}_0^{N-2}\Bigr\},
		\end{equation*}
		are such that $\ms{Z}(\Pi^\mathrm{s}) \in \boldsymbol{\Pi}_I$, with $\boldsymbol{\Pi}_I$ defined in~\eqref{eq:DeltaI}, and $\ms{Z}(\Pi^\mathrm{p}) \in \boldsymbol{\Pi}_\mathrm{p}$,
	\end{itemize}
	then the feedback interconnection $P \star K$ is robustly stable for all $P\in\ms{P}$, as certified by $\Pi^\mathrm{s}$, and meets the specified robust performance requirements, as characterized by $\Pi^\mathrm{p}$.
\end{theorem}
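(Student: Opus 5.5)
The plan is to reduce the statement to the robust performance theorem via IQCs (Theorem 2). It will be applied with the virtual uncertainty $\Delta_I$ and the piecewise-constant multiplier $\Pi:=\mr{daug}(\ms{Z}(\Pi^\mr{s}),\ms{Z}(\Pi^\mr{p}))$, taken uniformly over all $M\in\ms{M}$.

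By the inclusion \eqref{eq:DeltaI}, every closed loop $P\star K$ with $P\in\ms{P}$ equals $\Delta_I\star M$ for some $M\in\ms{M}$. It therefore suffices to certify robust stability and performance of $\Delta_I\star M$ for every $M\in\ms{M}$. Well-posedness is assumed in the statement, and each such $M\in\mc{RH}_\infty(\mb{E})$ by \eqref{eq:RobM}.

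\textbf{Step 1: the multiplier conditions.} The hypotheses give $\ms{Z}(\Pi^\mr{s})\in\boldsymbol{\Pi}_I\subseteq\boldsymbol{\Pi}_\mr{in}$ and $\ms{Z}(\Pi^\mr{p})\in\boldsymbol{\Pi}_\mr{p}$, with $\Pi_{k|k+1}^\mr{p}\in\mb{H}_\mr{in}$. The IQC for $\Delta_I$ holds trivially: for $p=q$,
\[
\mc{I}(\Pi^\mr{s},q,q)=\int (\sbullet)^*\Matrix{c}{I\\I}^*\Pi^\mr{s}\Matrix{c}{I\\I}q\,\mr{d}\omega\ge 0,
\]
by the defining inequality of $\boldsymbol{\Pi}_I$. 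A piecewise-constant $\Pi$ is measurable and essentially bounded, so it is an admissible non-rational multiplier, as allowed by the remark after Theorem 2.

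\textbf{Step 2: the FDI \eqref{eq:robstabperfFDI} at every frequency, which is the core step.} Fix $M\in\ms{M}$ and $\omega\in\mc{I}_{k|k+1}$. On this interval $\Pi(e^{j\omega})=\Pi_{k|k+1}$ is constant, and $M(e^{j\omega})=(1-\lambda_\omega)S_k+\lambda_\omega S_{k+1}$ with $S_k\in\mr{co}(\Gamma^+_{M,k})$ and $S_{k+1}\in\mr{co}(\Gamma^-_{M,k+1})$. Hence $M(e^{j\omega})\in\mr{co}(\Gamma^+_{M,k}\cup\Gamma^-_{M,k+1})$, a convex combination $\sum_i\alpha_i\gamma_i$ of generators.

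The obstacle is that $X\mapsto\Matrix{c}{X\\I}^*\Pi\Matrix{c}{X\\I}$ is not convex in $X$ for general $\Pi$. The inertia structure must be used: $\Pi_{11}\succeq0$, i.e.\ the $\mr{qq}$ and $\mr{zz}$ blocks are nonnegative (for $\mc{H}_\infty$, $\gamma^{-1}I\succ0$). Writing the quadratic form as
\[
\Pi_{22}+X^*\Pi_{12}^*+\Pi_{12}X+X^*\Pi_{11}X,
\]
it is affine plus a convex term in $X$. Thus the map $X\mapsto F(X)$ is matrix-convex: $F(\sum\alpha_i\gamma_i)\preceq\sum\alpha_iF(\gamma_i)$. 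This follows from $\sum\alpha_i(\gamma_i-\bar X)^*\Pi_{11}(\gamma_i-\bar X)\succeq0$, or equivalently from the Schur form \eqref{eq:SchurFDIConstraints}, which is affine in $M$ for $\Pi_{11}\succ0$. By the LMIs \eqref{eq:finiteIQCFDIPerf}, $F(\gamma_i)\prec0$ for every generator, so $F(M(e^{j\omega}))\prec0$.

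Since there are finitely many generators, the strict inequality is uniform: $F\preceq-\epsilon I$ on each closed interval. This holds at both boundary frequencies, including $\omega_k$ approached from the left or right, despite the discontinuities in the piecewise convex operator. Extending to $[-\pi,0]$ by conjugate symmetry, $M(e^{-j\omega})=\overline{M(e^{j\omega})}$ for real-rational $M$, with the multiplier extended as $\Pi(e^{-j\omega})=\overline{\Pi(e^{j\omega})}$, gives the FDI on all of $\partial\mb{D}$.

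\textbf{Step 3: conclusion.} Theorem 2 now yields robust stability and performance of $\Delta_I\star M$ for all $M\in\ms{M}$, and hence of $P\star K$ for all $P\in\ms{P}$ by \eqref{eq:DeltaI}. The performance bound is $J(\ms{Z}(\Pi^\mr{p}))$.

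I expect the convexity argument of Step 2 to need the most care, in particular ensuring $\Pi_{11}\succeq0$ for the augmented multiplier and handling the symmetric extension. Step 1 and Step 3 are bookkeeping.
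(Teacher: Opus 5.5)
Your proposal is correct and follows the paper's own argument: the quadratic form (the paper's map $\Xi$) is matrix-convex because $\Pi_{11}\succeq 0$, so negativity at the generators carries over to $\mr{co}(\Gamma^+_{M,k}\cup\Gamma^-_{M,k+1})\ni M(e^{j\omega})$. The IQC robust performance theorem with $\ms{Z}(\Pi^{\mr{s}})\in\boldsymbol{\Pi}_I$, combined with the inclusion $\ms{P}\star K\subseteq\Delta_I\star\ms{M}$, then gives the result. Your extra details (uniform strictness from finitely many generators, the conjugate-symmetric extension to $[-\pi,0]$, and the explicit IQC check for $\Delta_I$) fill gaps the paper leaves implicit; just correct the cross terms in your expansion, which should read $X^*\Pi_{12}+\Pi_{12}^*X$.
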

\begin{proof}
	Note that~\eqref{eq:finiteIQCFDIPerf} follows from multiplying 
	\begin{equation}\label{eq:combinedMult}
		\Pi_k := \Matrix{cccc}{\Pi^\mr{s}_{\mr{qq},k} & 0 & \Pi^\mr{s}_{\mr{qp},k} & 0 \\ 0 & \Pi^\mr{p}_{\mr{zz},k} & 0 & \Pi^\mr{p}_{\mr{zw},k} \\ (\Pi^\mr{s}_{\mr{zw},k})^* & 0 & \Pi^\mr{s}_{\mr{pp},k} & 0 \\ 0 & (\Pi^\mr{p}_{\mr{zw},k})^* & 0 & \Pi^\mr{p}_{\mr{ww},k}}.
	\end{equation}
	from the left with $[\widehat{M}_k^* \ I]$ and from the right with the conjugate transpose of $[\widehat{M}_k^* \ I]$ where $\widehat{M}_k$ is substituted by the generators in $\Gamma_{M,k}^{+} \cup \Gamma_{M,k+1}^{-}$.
	Note that $\Pi^\mr{s}_k,  \Pi^\mr{p}_k\in \mb{H}_\mr{in}^{\sbullet \times \sbullet} \Rightarrow \Pi_k \in \mb{H}_\mr{in}^{\sbullet \times \sbullet}$, and, by definition of $\ms{Z}$, we have that $\ms{Z}(\Pi)(\omega)\! \in\! \boldsymbol{\Pi}_\mr{in}^{\sbullet\! \times\! \sbullet}$. The mapping
	\begin{equation*}
         \Xi(\Pi_k, \gamma_M):=	
         \Matrix{c}{\gamma_M \\ I}^* \Pi_k \Matrix{c}{\gamma_M \\ I},
	\end{equation*}
	is convex for all $\gamma_M \in \mr{co}\left( \Gamma_{M,k}^+ \cup \Gamma_{M,{k+1}}^- \right)$ because $\Pi_{11, k} \succeq 0$. Negativity of $\Xi(\Pi, \gamma_M)$ at every generator $\gamma_M\in \Gamma_{M,k}\cup \Gamma_{M,{k+1}}$ implies negativity at all values $\gamma_M \in \mr{co}(\Gamma_{M_k}\cup \Gamma_{M_{k+1}})$. Now using the fact that $\ms{Z}(\Pi^ \mr{s})$ is an element of the set $\boldsymbol{\Pi}_I$, we have that
	\begin{equation}
		\Matrix{c}{\ms{M}(\omega) \\ I}^* \ms{Z}(\Pi)(\omega) \Matrix{c}{\ms{M}(\omega) \\ I} \prec 0, \ \forall\omega \in\partial\mb{D},
	\end{equation}
	implies robust stability for each $M \in \ms{M}$, with $\ms{M}$ as in~\eqref{eq:RobM}, which in turn implies that $P \star K$ is robustly stable and satisfies the performance specifications for every $P\in\ms{P}$.
\end{proof}

This result represents a fundamental advancement in the IQC-based analysis of systems with frequency-domain uncertainty. By employing convex uncertainty descriptions together with a finite-dimensional LMI characterization, the traditionally semi-infinite FDIs are rigorously reduced to a tractable finite-dimensional SDP. In contrast to existing approaches, this formulation enables exact certification of robust stability and performance directly from finitely many FRF measurements, while remaining convex in both the stability and performance multipliers. As such, it provides the first fully convex framework for frequency-domain robustness analysis based solely on finite, uncertain spectral data—without relying on state-space models or the KYP lemma.

Despite this progress, convexity is no longer guaranteed once the controller coprime factors $(V,U)$ depend on unknown design parameters $
\theta$, e.g., $V = V(\theta)$ and $U = U(\theta)$. In that case, the resulting constraints in~\eqref{eq:finiteIQCFDIPerf} become quadratic in $\theta$ and bilinear in the multipliers $\Pi$, breaking convexity of the synthesis problem. The following section addresses this issue by reformulating the problem setting to recover a finite-dimensional convex SDP that enables controller synthesis directly from uncertain, sampled FRF data, while providing guaranteed stability and performance over the entire frequency spectrum.

\subsection{Finite-Dimensional Convex Controller Synthesis} \label{sec:orb:synt}

In this subsection, we extend the finite-frequency IQC framework to controller synthesis. The objective is to formulate a convex optimization problem that determines the controller parameters directly from uncertain FRF data, while ensuring stability and performance guarantees across the entire frequency spectrum. To achieve this, we introduce a frequency-domain representation of the controller through its coprime factors $(V,U)$ and derive a convex reformulation that yields a finite-dimensional SDP suitable for synthesis. The forthcoming developments establish the precise parametrization, interpolation, and convex constraints required to obtain such controllers.

The main result will be converted to formally state robust stability and performance analysis in the form of an SDP:
\begin{equation}\label{eq:RobMultIQCite}
	\!\!\!\!\!\!	\begin{aligned}
		&\underset{\{\Pi^\mathrm{s}_{k|k+1}, \Pi^\mathrm{p}_{k|k+1}\}_{k=0}^{N-2} }{\mathrm{minimize}} & & \ \ \ \ \ J(\ms{Z}(\Pi^\mr{p})) \\
		&  \! \ \ \ \ \mr{subject \ to} & & \ \! \eqref{eq:finiteIQCFDIPerf} \ \mr{for \ all} \ \gamma_M \in \Gamma_{M,k}^+ \cup \Gamma_{M,{k+1}}^-\\
		& & & \ms{Z}(\Pi^\mr{s}) \in \boldsymbol{\Pi}_I, \ \ms{Z}(\Pi^\mr{p}) \in \boldsymbol{\Pi}_{\mr{p}}.
	\end{aligned}
\end{equation}
\begin{algorithm}[b]
	\caption{Iterative IQC-Based Controller Synthesis}\label{alg:robustAlg}
	\begin{algorithmic}[2]
		\Require Initial stabilizing controller $K_0 = U_0 V_0^{-1}$, given by $\theta_0\in\Theta$, generalized plant estimate generators $\Gamma^\mathrm{zw}_{P,k}, \ldots,
		\Gamma^\mathrm{yu}_{P,k}$, $k \in \mb{I}_0^{N-2}$, performance index $p \in \{2, \infty\}$
		\State {\bf construct} Generators $\Gamma_{M, k}^\pm$, see Section~\ref{Sec:constM}
		\State {\bf solve} SDP \eqref{eq:RobMultIQCite} under fixed $\theta_{0}$, giving $\Pi_{0}$
		\State {\bf set} $J^* \gets J(\ms{Z}(\Pi^\mr{p}))$ with $\Pi_{0},\theta_{0}$, $i\leftarrow 1$
		\Repeat
		\State {\bf solve} SDP \eqref{eq:RobContSDPAlg} under fixed $\Pi_{i-1}$, giving $\theta_{i}$
		\State {\bf solve} SDP \eqref{eq:RobMultIQCite} under fixed $\theta_{i}$, giving $\Pi_{i}$
		\State {\bf update}  $J^* \gets J(\Pi_{\mathcal{H}_p})$ with $\Pi_{i},\theta_{i}$
		\State {\bf update} controller $K \gets U_iV_i^{-1}$ based on $\theta_i$
		\State {\bf update} index $i \gets i+1$
		\Until{$J^*$ converged}
		\Ensure Controller $K$, ensuring robust stability and performance specifications 
	\end{algorithmic}
\end{algorithm}
Similar to the approach under exact and full spectral knowledge, treated in Section~\ref{sec:idealSol}, we aim to construct an equivalent SDP to~\eqref{eq:RobMultIQCite} which is convex in $\theta$ and $\{\Pi^\mr{p}_{k|k+1}\}_{k=0}^{N-2}$ for given $\{\Pi^\mr{s}_{k|k+1}\}_{k=0}^{N-2}$. Toward that end, we use the piecewise convexity on $\Omega$ of $\ms{C}(\widehat{M})(\omega)$ and the fact that the mapping $\Xi(\Pi_k, \gamma_M)$ is convex. By Schur's complement we have that
\begin{equation}\label{eq:RobContIQCite}
	\Matrix{cc}{\Pi_{22, k|k+1}+\mr{He}(\gamma_M^* \Pi_{12, k|k+1})  & \gamma_M^* \\ \gamma_M & -\Pi_{11, k|k+1}^{-1}} \prec 0,
\end{equation}
where $\mr{He}(X) := X^*+X$, is equivalent to~\eqref{eq:finiteIQCFDIPerf} on $\mc{I}_{k|k+1}$ for all generators $\gamma_M(\theta) \in \Gamma_{M,k}^+ \cup \Gamma_{M,{k+1}}^-$, all $\theta \in \Theta$, all $k\in\mb{I}_0^{N-2}$ and all $\Pi \in \boldsymbol{\Pi}_\mr{in}$. We note that~\eqref{eq:RobContIQCite} is indeed convex for in $\theta$ and $\{\Pi^\mr{p}_{k|k+1}\}_{k=0}^{N-2}$. Note that solving this LMI at the generators, negativity is ensured on the entire interval $\mc{I}_{k|k+1}$. Therefore, we can now state the following SDP
\begin{equation}\label{eq:RobContSDPAlg}
	\!\!\!\!\!\!	\begin{aligned}
		&\underset{\theta \in \Theta, \{\Pi^\mathrm{p}_{k|k+1}\}_{k=0}^{N-2} }{\mathrm{minimize}} & & \ \ \ \ \ J(\ms{Z}(\Pi^\mr{p})) \\
		&  \! \ \ \ \ \mr{subject \ to} & & \ \! \eqref{eq:RobContIQCite} \ \mr{for \ all} \ \gamma_M \in \Gamma_{M,k}^+ \cup \Gamma_{M,{k+1}}^-\\
		& & & \ms{Z}(\Pi^\mr{p}) \in \boldsymbol{\Pi}_{\mr{p}}.
	\end{aligned}
\end{equation}
The full finite-dimensional robust finite-data controller synthesis algorithm is presented in Algorithm~\ref{alg:robustAlg}. This algorithm provides a robust controller with stability and performance guarantees using only finite FRF data estimates of a generalized plant, which is a novel contributions to the field of convex FRF synthesis and the IQC framework as a whole.
\remark{Even though we tailored this section to the convex synthesis framework derived in \cite{bloemers2022frequency}, it should be noted that the main results of this section are of independent interest in general IQC-based analysis in the frequency domain since most results in literature do consider exact FRF data over the entire spectrum which was previously not computationally feasible without using the KYP lemma, e.g., having access to the underlying state-space model.}



%% file: sections/GeneratorConstruction.tex

\section{Generator Construction}\label{sec:genConstr}
%
In this section, we describe how to construct the generators $\Gamma_{M,k}^{\pm}$ required to formulate the synthesis problem is Section \ref{sec:orb:synt} as an SDP and represent the uncertainty of $\widehat{P}_k$ and subsequently $\widehat{M}_k$.
We first focus on generating covering sets for arbitrary unit-balls, which form the building blocks for representing uncertainty in dynamical systems and stochastic estimates. Once the unit-ball generators are defined, we discuss their application to more complex uncertainty descriptions.

\vspace{-2mm}
\subsection{Unit-Ball Generator Construction}

Unit balls are convex sets that serve as building blocks for more complex uncertainty descriptions. The matrix unit-ball with respect to a given norm $\|\sbullet\|$ is defined as
\begin{equation}
	\mathbb{B}(\mb{C}^{n \times m}, \|\sbullet\|) := \{ M \in \mb{C}^{n \times m} \mid \|M\| \le 1 \}.
\end{equation}
To construct generators for a unit-ball in $\mathbb{C}^{n \times m}$, we leverage the equivalence between the Frobenius norm and the Euclidean vector norm. Let the following bijective maps be defined:
\begin{equation}
	\begin{aligned}
		f_n&:\mathbb{C}^{n \times m} \rightarrow \mathbb{R}^{2n\times m}, & & f_n(M) = \begin{bmatrix} \mathrm{real}(M) \\ \mathrm{imag}(M) \end{bmatrix},\\
		g_n&:\mathbb{C}^{n \times m} \rightarrow \mathbb{C}^{nm}, & & g_n(M) = \mathrm{vec}(M),
	\end{aligned}
\end{equation}
which preserve the Frobenius norm, i.e., $\|M \|_\mathrm{FRO} = \|f_n(M)\|_\mathrm{FRO} = \|g_{2n}(f_n(M))\|_2$. Hence, these isomorphic maps allow us to construct generators for complex matrices from well-known constructions in real vector spaces. Let us denote the set of $n_\mathrm{g}$ independent generators for the unit-ball $\mathbb{B}(\mb{C}^{n \times m}, \| \sbullet \|)$ as $\mathbb{V}_{n_\mathrm{g}}(\mb{C}^{n \times m}, \|\sbullet\|)$ for which it is true that any $\Gamma\in\mathbb{V}_{n_\mathrm{g}}(\mb{C}^{n \times m}, \|\sbullet\|)$ satisfy that $\mathbb{B}(\mb{C}^{n \times m}, \| \sbullet \|) \subseteq \mr{co}(\Gamma)$.

Then, given any $\Gamma \in\mathbb{V}(\mathbb{R}^{2nm}, \|\sbullet \|_2)$, it is true that 
\begin{equation}
\left\{ (f_n^{-1}( g_{2n}^{-1}( \gamma)) \ \big| \ \gamma \in \Gamma 
\right\}= \breve \Gamma \in \mathbb{V}_{n_\mathrm{g}}(\mathbb{C}^{n\times m}, \|\sbullet\|_\mathrm{FRO}).
\end{equation}
Conversely, given any $\breve \Gamma \in \mathbb{V}_{n_\mathrm{g}}(\mathbb{C}^{n\times m}, \|\sbullet\|_\mathrm{FRO})$
\begin{equation}
\bigl\{ (g_{2n}( f_n( \breve\gamma)) \ \big| \ \breve\gamma \in \breve\Gamma 
\bigr\}=  \Gamma \in \mathbb{V}_{n_\mathrm{g}}(\mathbb{R}^{2nm}, \|\sbullet \|_2).
\end{equation}

To construct a generator set 
$\Gamma \in \mathbb{V}_{n_g}(\mathbb{R}^{2nm},\|\sbullet\|_2)$ 
that covers the Euclidean unit ball in $\mathbb{R}^d$, with $d = 2nm$,
a natural starting point is the case $n_g = d+1$, since no polytope with fewer
than $d+1$ vertices can cover a full-dimensional convex body in $\mathbb{R}^d$
(see, e.g.,~\cite{matousek2013lectures, schneider2013convex}).  
Among all polytopes in $\mathbb{R}^d$ with exactly $d+1$ vertices that
circumscribe the unit ball, the unique minimal-volume one 
(up to rotation) is the regular simplex embedding approach described in~\cite{berger2009geometry,coxeter1973regular}.

Following this construction,
consider the $(d+1)$ standard basis vectors 
$e_i \in \mathbb{R}^{d+1}$,
and define
\begin{equation}
	E_{d+1}
	=
	\left\{
	r^{-1} e_i
	\;\middle|\;
	i \in \mathbb{I}_1^{d+1}
	\right\},
	\qquad
	r := \frac{1}{\sqrt{d(d+1)}} ,
	\label{eq:simplex}
\end{equation}
so that $\mathrm{co}(E_{d+1})$ is a regular simplex of inradius~$1$.
Its barycenter is 
$c = r^{-1}(d+1)^{-1}\mathbf{1}_{d+1}$ and lies on the line orthogonal to the
$d$-dimensional hyperplane spanned by $E_{d+1}$.  The centered simplex
$V_{d+1} := E_{d+1} \ominus c$, with $\ominus$ the Minkowski difference, therefore spans a $d$-dimensional subspace of 
$\mathbb{R}^{d+1}$.  Any factorization
\begin{equation}\label{eq:qrDecomposition}
	\begin{aligned}
		V = AB, \ V = \Matrix{ccc}{v_1 & \cdots & v_{d+1}}, \ v_i \in V_{d+1}, i\in\mb{I}_{1}^{d+1}, \\
		A \in \mathbb{R}^{(d+1)\times d}, \ A^\top A = I_d, \ B
		\in \mathbb{R}^{d\times(d+1)},
	\end{aligned}
\end{equation}
projects the centered simplex onto $\mathbb{R}^{d}$ along an orthonormal basis
of its affine span.  The columns of $B$ then form a generator set
$\Gamma = \{b_i\}_{i=1}^{d+1}$ whose convex hull contains the unit ball with
equality of inradius.  In practice, such a decomposition is obtained directly
from a rank-revealing QR factorization of $V_{d+1}$. The construction is summarized in Algorithm~\ref{alg:simplexEmbedding}.

\begin{algorithm}[t]
	\caption{Generators for the simplex embedding}
	\label{alg:simplexEmbedding}
	\begin{algorithmic}[1]
		\Require $d \in \mathbb{N}$, $d \ge 1$
		\State {\bf set} $r^{-1} \gets \sqrt{d(d+1)}$
		\State {\bf set} $E_{d+1}$ using~\eqref{eq:simplex}
		\State {\bf set} $c \gets r^{-1}(d+1)^{-1}\mathbf{1}_{d+1}$
		\State {\bf set} $V_{d+1} \gets E_{d+1} \ominus c$
		\State {\bf compute} $A,B$ according to~\eqref{eq:qrDecomposition}
		\State {\bf set} $\Gamma \gets \{b_i\}_{i=1}^{d+1}$
		\Ensure $\Gamma$
	\end{algorithmic}
\end{algorithm}

The simplex embedding yields $n_g = d+1$ generators, which is the smallest
number of generators for which a polytope can cover the Euclidean unit ball in
$\mathbb{R}^d$.
Moreover, among all polytopes with $d+1$ vertices that circumscribe the unit
ball, the regular simplex uniquely minimizes volume up to orthogonal
transformations.
For $n_g > d+1$, determining the least conservative generator sets is
significantly more challenging: optimal polytopes are known only for certain
special values of $n_g$, and no general constructive characterization is
available.

\vspace{-2mm}
\subsection{Generator construction for generalized plants}
\subsubsection{Uncertain representations}
In this subsection, we will discuss uncertainty characterizations of the estimate $\widehat{P}$ of the generalized plant. Suppose that the available information about the system at a given frequency point 
$\omega_k \in \Omega_N$ is characterized by the estimate $\widehat{P}_k$ and is modeled 
using an additive uncertainty description\footnote{Input multiplicative uncertainty 
	$\widehat{P}(I+W_1 \Delta_\mathrm{u} W_2)$ and output multiplicative uncertainty 
	$(I+W_1 \Delta_\mathrm{u} W_2)\widehat{P}$ can be treated analogously to the additive formulation.}
\begin{multline} 
	\widehat{P}_k = \left\{ P_\mu(e^{j\omega_k}) + W_1(e^{j\omega_k}) \Delta_\mr{u}(e^{j\omega_k}) W_2(e^{j\omega_k}) \ | \right. \\ \left. \ \Delta_\mr{u} \in \boldsymbol{\Delta}_\mr{u} \right\}, \label{eq:adduncer}
\end{multline}
where $P_\mu$ is the nominal estimate of $P$, $W_1$ and $W_2$ are pre- and post-multiplication scaling matrices, respectively, and $\boldsymbol{\Delta}_\mr{u}$ is the unstructured dynamic uncertainty set:
\begin{equation} 
	\boldsymbol{\Delta}_\mr{u} := \left\{ \Delta_\mr{u} \in \mc{RH}_\infty(\mb{E})^{n_\mathrm{d} \times n_\mathrm{d}} \ | \ \|\Delta_\mr{u}\|_{\mc{H}_\infty} \leq 1 \right\}.
\end{equation}

We over-approximate the matrix $\Delta_\mr{u} (e^{j \omega_k})$ at every frequency point by noting that $\|\Delta_\mr{u}\|_{\mc{H}_\infty} \leq 1 \Rightarrow \|\Delta_\mr{u}(e^{j \omega})\|_{2,2} \leq 1$, $\forall \omega \in \Omega$. Hence, we have to construct generators for a convex cover of $\mathbb{B}(\mathbb{C}^{n\times m}, \|\sbullet\|_{2,2})$ with $n=n_\mathrm{d}$ and $m=n_\mathrm{d}$. We will accomplish this for the general case, when $n\neq m$.
Note that the following inequality holds for any $\Delta_\mr{u} \in \mb{C}^{n_\mathrm{d} \times n_\mathrm{d}}$
\begin{equation}
	\|\Delta_\mr{u}\|_{2,2} \leq \|\Delta_\mr{u}\|_\mr{FRO} \leq \sqrt{\min(n,m)} \|\Delta_\mr{u}\|_{2,2}.
\end{equation}
Using this, we can construct the required generators as:
\begin{equation}
\breve \Gamma \in \tfrac{1}{\sqrt{\min(n,m)}} \cdot  \mathbb{V}_{n_\mathrm{g}}(\mathbb{C}^{n\times m}, \|\sbullet\|_\mathrm{FRO}).
\end{equation}
Now let $\Gamma_\ast\in \mathbb{V}_{2nm+1}(\mathbb{R}^{2nm}, \|\sbullet \|_2)$ be the set of generators constructed by the simplex method. Then, because of the affine structure~\eqref{eq:adduncer}, the generators for the convex set $\widehat{P}_k$ are given by
\begin{multline}
	\Gamma_{P,k} = \bigl\{ P_\mu(e^{j\omega_k}) + W_1(e^{j\omega_k}) \gamma_{P} W_2(e^{j\omega_k}) \mid \\   \gamma_P \in \tfrac{1}{\sqrt{\min(n,m)}} f_n^{-1}(g_{2n}^{-1}(\Gamma_\ast))\bigr\}.
\end{multline}
These generators are directly applicable to the main result of the previous section.
\subsubsection{Stochastic representations} \label{sec:stoc:gen}


Often when the FRF estimate of a plant is obtained, instead of the bounded uncertainty characterization in \eqref{eq:adduncer}, only the mean and covariance of  $P(e^{j\omega_k})$ are available. Hence, we will consider how generators can be obtained for such a statistical formulation of the uncertainty in this subsection.  Characterization of the generators will be obtained in terms of a vectorized multivariate normal distribution, for which we will assume that estimates at different frequency points are statistically independent and that the noise contributions are circular complex normally distributed~\cite{pintelon2012system}.

Let the FRF estimate $\widehat{P}_k := \widehat{P}(e^{j\omega_k}) \in \mb{C}^{n \times m}$ be given, at a certain frequency point $\omega_k \in \Omega_N$, as the complex circular symmetric multivariate normal distribution
\begin{equation}
	\textbf{v}_k \sim \mc{N}_{\mb{C}^{n m}} \left( \mu_k, \Sigma_k \right),
\end{equation}
with $\textbf{v}_k$ a $nm$-dimensional random complex circularly symmetrical variable, corresponding to $\mathrm{vec}(\widehat{P}_k)$, with mean $\mu_k \in \mb{C}^{n m}$  and covariance matrix $\Sigma_k \in \mb{H}_{\succ 0}^{nm}$. We can represent the multivariate complex normal distribution as
\begin{equation}
	\textbf{v}_k = \mu_k + S_k  \frac{\textbf{a}+j \textbf{b}}{\sqrt{2}}, \quad \textbf{a},\textbf{b} \sim \mc{N}_{n m}(0, I), \quad \textbf{a}\perp \textbf{b},
\end{equation}
where $S_k$ is any square matrix such that $S_kS_k^* = \Sigma_k$ and $\textbf{a} \perp \textbf{b}$ denotes that $\textbf{a}$ and $\textbf{b}$ are statistically independent. The random variables $\textbf{a}$ and $\textbf{b}$ are independent of $k$ as well. Let $\textbf{r}_k := f_{nm}(\textbf{v}_k)$ be the real embedding of the complex variable $\textbf{v}_k$, for which we obtain that
\begin{equation}
	\textbf{r}_k = f_{nm}(\mu_k) + L_k \textbf{x}, \quad \textbf{x} = \Matrix{cc}{\textbf{a}\\ \textbf{b}} \sim \mc{N}_{2nm}(0,I),
\end{equation}
with
\begin{equation*}
	L_k = \frac{1}{\sqrt{2}}\Matrix{cc}{\mr{real}(S_k) & - \mr{imag}(S_k) \\ \mr{imag}(S_k) & \mr{real}(S_k)}.
\end{equation*}
The normal multivariate distribution of $\textbf{r}_k$ is given by
\begin{equation}
	\textbf{r}_k \sim \mc{N}_{2nm}\left(f_{nm}(\mu_k), \frac{1}{2} \Matrix{cc}{\mr{real}(\Sigma_k) & \mr{imag}(\Sigma_k) \\ -\mr{imag}(\Sigma_k) & \mr{real}(\Sigma_k)} \right).
\end{equation}
Given a confidence $\alpha \in [0, 1)$, the corresponding set
\begin{equation}
	\left\{ x\in\mathbb{R}^{2nm} \mid \|\textbf{x}-x\|_2 \leq r(\alpha),\  \textbf{x}\sim \mc{N}_{2nm}(0, I) \right\},
\end{equation}
is the ball  $r(\alpha)\cdot \mathbb{B}(\mathbb{C}^{2nm}, \|\sbullet \|_2)$, hence
\begin{equation}
	C_k(\alpha) := \mu_k \oplus r(\alpha) \ L_k\cdot \mathbb{B}(\mathbb{R}^{2nm}, \|\sbullet \|_2),
\end{equation}
gives the convex set in which $\textbf{r}_k$ falls with probability $\alpha$. The radius $r(\alpha)$ is equal to the square root of the inverse cumulative distribution function, of the Chi-squared distribution $\chi^2_{2nm}$ with $2nm$ degrees of freedom, denoted by $\sqrt{Q_{\chi_{2nm}^2}(\alpha)}$.

Now let $\Gamma_\ast\in \mathbb{V}_{2nm+1}(\mathbb{R}^{2nm}, \|\sbullet \|_2)$ be the set of generators constructed by the simplex method.  The generators that cover $C_k(\alpha)$ are given by
\begin{multline}
	\Gamma_{C_k(\alpha)} = \bigl\{  f_{nm}(\mu_k) +  r(\alpha)L_k\gamma \mid   \gamma \in \Gamma_\ast \bigr\},
\end{multline}
from which it follows immediately that the generators that cover $\widehat{P}_k$, with probability $\alpha$, is
\begin{equation}
	\Gamma_{P_k, \alpha} = \bigl\{  g^{-1}_n\left( f^{-1}_{nm}\left(\gamma \right) \right) \mid   \gamma \in \Gamma_{C_k(\alpha)} \bigr\}.
\end{equation}

\subsection{Constructing $\Gamma_{\widehat{M}_k}^{\pm}$} \label{Sec:constM}
In this section, we provide construction for the generators $\Gamma_{M_k}^-$ and $\Gamma_{M_k}^+$  that characterize the piecewise convex set $\ms{M}$. This process is divided into three steps.
\vspace{0mm}
\subsubsection{General consideration}
\textcolor{white}{ }\\
The first step is to characterize the uncertain interconnection $\ms{P} \star K$ on each interval $\mc{I}_{k|k+1}$ for which we note that it holds that
\begin{equation}
	\ms{P}(\omega) \star K(e^{j\omega}) \subseteq \ms{L}(\widehat{\mc{P}})(\omega) \star K(e^{j\omega}).
\end{equation}
For brevity of the notation, we will use $K_\omega, \ldots, V_\omega$ instead of $K(e^{j\omega}), \ldots, V(e^{j\omega})$. We can write the LFT $\ms{P}(\omega) \star K_\omega$ using $K_\omega = U_\omega V_\omega^{-1}$ as
\begin{equation*}
	\ms{P}_\mr{zw}(\omega) + \ms{P}_\mr{zu}(\omega) U_\omega V_\omega^{-1} \left( I - \ms{P}_\mr{yu}(\omega) U_\omega V_\omega^{-1}\right)^{-1} \ms{P}_\mr{yw}(\omega),
\end{equation*}
because the linear interpolation operator preserves the blockwise structure of $\widehat{\mc{P}}$. Next, we decompose $\ms{L}(\widehat{\mc{P}}_\mr{yu})(\omega)$ as:
\begin{equation}\label{eq:uncerCoprimeG}
	\ms{L}(\widehat{\mc{P}}_\mr{yu})(\omega) := D_{k+1/2}^{-1} N_{k+1/2} + \ms{L}(\overline{\mc{P}}_\mr{yu})(\omega),
\end{equation}
on each $\mc{I}_{k|k+1}$, $k\in \mb{I}_0^{N-2}$ such that $(D_{k+1/2}, N_{k+1/2})$ denotes the sampled version of a lcf $(D, N)$ for which it holds that $D_{k+1/2}^{-1} N_{k+1/2} = \frac{1}{2}(c(\Gamma_{\mr{yu}, k}) + c(\Gamma_{\mr{yu}, k+1}))$. 

The generators of $\ms{L}(\overline{\mc{P}}_\mr{yu})(\omega)$ at $\omega_k$ and $\omega_{k+1}$ simply are $\widehat{\mc{P}}_{\mr{yu},k}-D_{k+1/2}^{-1}N_{k+1/2}$ and $\widehat{\mc{P}}_{\mr{yu},k+1}-D_{k+1/2}^{-1}N_{k+1/2}$, respectively. We use the same LFR trick from the previous section to pull out $\Delta_I$ and, in the process, we obtain a characterization of the  family of sets denoted by $\ms{F}(\omega)$:
\begin{equation}
	\ms{F}(\omega) = \Matrix{cc}{I-\Lambda(\omega) & \  D_k \ms{L}(\widehat{\mc{P}}_\mr{yw})(\omega)  \\ \ms{L}(\widehat{\mc{P}}_\mr{zu})(\omega) U_\omega & \  \ms{L}(\widehat{\mc{P}}_\mr{zw})(\omega)},
\end{equation}
where
\begin{equation}
	\Lambda(\omega) = D_{k+1/2} ( V_\omega- \ms{L}(\overline{\mc{P}}_\mr{yu})(\omega) U_\omega) - N_{k+1/2} U_\omega,
\end{equation}
for $\omega \in \omega \in \mc{I}_{k|k+1}$. We note that $\ms{L}(\widehat{\mc{P}})(\omega) \star K_\omega = \Delta_I \star \ms{F}(\omega)$. Now we need to over approximate $\ms{F}(\omega)$ by a piece-wise convex set $\ms{M}(\omega)$ with finite generators. First, we note that the presence of $V_\omega$ and $U_\omega$ prevent convexity over the interval, while it does not destroy convexity at a fixed $\omega$. 

\subsubsection{Piece-wise convex sets of controller coprime factors}
\textcolor{white}{ }\\
Next, piece-wise convex regions $\ms{U}(\omega)$ and $\ms{V}(\omega)$ are defined such that $U_\omega \in \ms{U}(\omega)$ and $V_\omega \in \ms{V}(\omega)$, respectively. The uncertainty, due to sampling, in both $V_\omega$ and $U_\omega$ arises in $\phi_\omega$. The controller parameters $(V, U)$ are defined for all $\omega$ but to make the approach computational tractable we can only use the sampled version, e.g., $(V_k, U_k)$ for $k = 0, 1, \cdots, N-1$. In order to characterize a piece-wise convex set $\Phi(\omega)$ such that $\phi_\omega \in \Phi(\omega)$ let us first define the following $N-1$ scalars
\begin{equation}\label{eq:controllerUncertainty}
	r_{k|k+1} = \sup_{\omega \in \mc{I}_{k|k+1}} \| \ms{L}(\phi)(\omega)-\phi_\omega \|_\mr{FRO}.
\end{equation}
These scalars can be approximated by sufficiently dense sampling offline. Here, $\ms{L}(\phi)$ is the linear interpolated curve between the points $\phi_{\omega_k}$. With these scalars we can define the following piece-wise convex set:
\begin{equation}
	\begin{aligned}
	\Phi(\omega) := \left\{ \ms{L}(\phi)(\omega) + r_{k|k+1} \Delta_\phi \ | \ \Delta_\phi \in  \mr{co}(\Gamma_\phi) \right\}.
	\end{aligned}
\end{equation}
An equivalent characterization is $\Phi(\omega) = \ms{C}(\widehat{\Phi})(\omega)$ where $\widehat{\Phi}_k$ is characterized by the generators $\Gamma_{\Phi_k}^- := \phi_{\omega_k}+r_{k-1|k} \Gamma_\phi$ and $\Gamma_{\Phi_k}^+ := \phi_{\omega_k}+r_{k|k+1} \Gamma_\phi$. Based on this piecewise convex set, we define the piecewise convex sets of the coprime factors as
\begin{equation}
	\begin{split}
		\ms{V}(\omega) &= C_V \Phi(\omega)+D_V,\\
		\ms{U}(\omega) &= C_U \Phi(\omega)+D_U.
	\end{split}
\end{equation}
Now that we have polytopic regions defined, we note that there is a bilinear product of such convex sets still present in $\ms{M}$ for which we have to define a convex polytopic over approximation. This is done in the next section.
\vspace{0mm}
\subsubsection{Polytopic over approximation} 
\textcolor{white}{ }\\
By substituting the piecewise convex sets $\ms{V}(\omega)$ and $\ms{U}(\omega)$ of $V_\omega$ and $U_\omega$, respectively, into $\ms{F}(\omega)$ we get bilinear products of two piecewise convex sets on $\mc{I}_{k|k+1}$. The bilinear product of two piecewise convex sets is not a piecewise convex set itself.
In this section, we provide a polytopic over-approximation of a bilinear product of two piece-wise convex sets which is used to find an polytopic over-approximation of $\ms{M}(\omega)$.  

Towards this end, let $\mc{A}(\omega)$ and $\mc{B}(\omega)$ be two piecewise convex sets given by $\ms{C}(\widehat{A})(\omega)$ with generators $\Gamma_{A,k}^\pm$ and $\ms{C}(\widehat{B})(\omega)$ with generators $\Gamma_{B,k}^\pm$, respectively. Then, let the bilinear product of the two piecewise convex sets be given by
\begin{equation}
	\!\!\!\! \mc{C}(\omega) = \left\{ A_\omega B_\omega \ \mid \ A_\omega \in \ms{C}(\widehat{A})(\omega), \ B_\omega \in \ms{C}(\widehat{B})(\omega) \right\}.
\end{equation}
We will use the following notation in order to obtain a piecewise convex over approximation of $\mc{C}(\omega)$
\begin{equation}
	\ms{C}_\otimes(\widehat{A}, \widehat{B}) (\omega)  := \ms{C}(\widehat{C})(\omega),
\end{equation}
where $\widehat{C}$ is defined by the generators
\begin{multline}
	\Gamma_{A, B, k|k+1} = \{ \gamma_A \gamma_B \ \mid \ \gamma_A \in \Gamma_{A,k}^+ \cup \Gamma_{A,k+1}^- , \\ \gamma_B \in \Gamma_{B,k}^+ \cup \Gamma_{B,k+1}^- \},
\end{multline}
such that $\Gamma_{C,k|k+1} = \Gamma_{C,k}^+ = \Gamma_{C,k+1}^-$. With this notation and these over-approximations in place, we now present the generators of $\widehat{M}_k$. First, we provide generators $\Gamma_\Lambda^\pm$ that specify an over-approximation of $\Lambda(\omega) \subseteq \ms{C}(\widehat{\Lambda})(\omega)$ which are:
\begin{subequations}
	\begin{align}
	\!\!\!\! \Gamma_{\Lambda, k}^+ &= D_{k|k+1} \left( \Gamma_{V,k}^+ -\Gamma_{\overline{P}_\mr{yu}, U, k|k+1} \right) - N_{k|k+1} \Gamma_{U, k}^+,\\
	\!\!\!\! \Gamma_{\Lambda, k}^- &= D_{k-1|k} \left( \Gamma_{V,k}^- -\Gamma_{\overline{P}_\mr{yu}, U, k-1|k} \right) - N_{k-1|k} \Gamma_{U, k}^-.
	\end{align}
\end{subequations}
And, hence, the generators of $\widehat{M}_k^\pm$ are given by
\begin{subequations}
	\begin{align}
	\Gamma_{M, k}^+ &= \Matrix{cc}{I - \Gamma_{\Lambda, k}^+ & D_{k|k+1} \Gamma_\mr{yu} \\ \Gamma_{\ms{L}(P_{zu}), U, k|k+1} & \Gamma_\mr{zw}},\\
	\Gamma_{M, k}^- &= \Matrix{cc}{I - \Gamma_{\Lambda, k}^- & D_{k-1|k} \Gamma_\mr{yu} \\ \Gamma_{\ms{L}(P_{zu}), U, k-1|k} & \Gamma_\mr{zw}}.
\end{align}
\end{subequations}
Lastly, we conclude that
\begin{equation*}
	\ms{P}(\omega) \star K \subseteq \Delta_I \star \ms{C}(\widehat{M}) (\omega),
\end{equation*}
providing an explicit construction of the generator $\Gamma\pm_{M,k}$ required in Theorem~\ref{thm:RigGuaran} and the Synthesis Algorithm~\ref{alg:robustAlg} rendering the main contribution of this paper applicable.

\subsection{Discussion}
Up untill this point, the exposition is done at the generalized plant level. In most practical situations, however, the generalized plant $P$ is a function of the to-be-controlled plant $G$. A popular example of this is the 4-Block control configuration. Associated to this control configurations, the weighted cost function is given by
\begin{equation}
	\left\| W_\mr{z} \Matrix{cc}{S & S G \\ K S & KSG} W_\mr{w} \right\|_{\mc{H}_p}, \ S = (I+GK)^{-1}.
\end{equation}
The corresponding unweighted generalized plant $P$ is then given by
\begin{equation}
	P = \Matrix{ccc}{I & G & -G \\ 0 & -I & I \\ I & G & -G},
\end{equation}
for which we note that
\begin{equation}
	\left(\Matrix{cc}{W_z & 0 \\ 0 & I} P \Matrix{cc}{W_w & 0 \\ 0 & I} \right) \star K = \Delta_I \star \left(\Matrix{cc}{I & 0 \\ 0 & W_z} M \Matrix{cc}{I & 0 \\ 0 & W_w} \right).
\end{equation}
Hence, $M$ can be constructed from the unweighted generalized plant and, in most situations, it holds true that $P_\mr{yu} = -G$. Now, based on finite estimate $\Gamma_{\widehat{G},k}$ we construct sampled coprime factors for every interval $[\omega_k, \omega_{k+1}]$. Specifically, let
\begin{equation}
	G_{\mu,k+1/2} :=\frac{c(\Gamma_{\widehat{G},k})+c(\Gamma_{\widehat{G},k+1})}{2},
\end{equation}
where $c(\cdot)$ is defined in Appendix~\ref{app:convex}. Moreover, let $\omega_{k+1/2} = \frac{\omega_k+\omega_{k+1}}{2}$, then we define the sampled coprime factorization by
\begin{equation}
	\begin{split}
		D_{k+1/2} &= (I+G_{\mu,k+1/2} K_0(e^{j \omega_{k+1/2}}))^{-1},\\ 
		N_{k+1/2}& = (I+G_{\mu,k+1/2} K_0(e^{j \omega_{k+1/2}}))^{-1} G_{\mu,k+1/2},
	\end{split}
\end{equation}
where $K_0 \in \mc{RH}_{\infty}^{n_\mr{u} \times n_\mr{y}}$ represents any stabilizing controller. Now, the generators $\widehat{M}_k$ can be constructed as shown in the previous subsections. It must also be noted that the simplex embedding approach for generator construction provides the least amount of generators possible, which might be conservative. There are advanced sampling strategies that provide less conservative generators, this however, goes at the expense of computational requirements.

Lastly, for simplicity, we omit the inter-frequency behavior of the weighting filters since this is negligable for low-order weighting filters. This could of course be incorporated using the polytopic overapproximation of the bilinear product of two convex sets as described in the previous subsection.

%% file: sections/Wirebonder.tex
\section{Industrial Wirebonder Application}
\label{sec:example}

This section demonstrates the proposed finite-data IQC-based robust controller synthesis approach on a real industrial system. The objective is to validate the methodology on a practical high-performance MIMO control problem using experimentally measured frequency response data, without relying on a parametric or physics-based model of the plant.

\begin{figure}[t]
	\centering
	\includegraphics[width=0.3\linewidth]{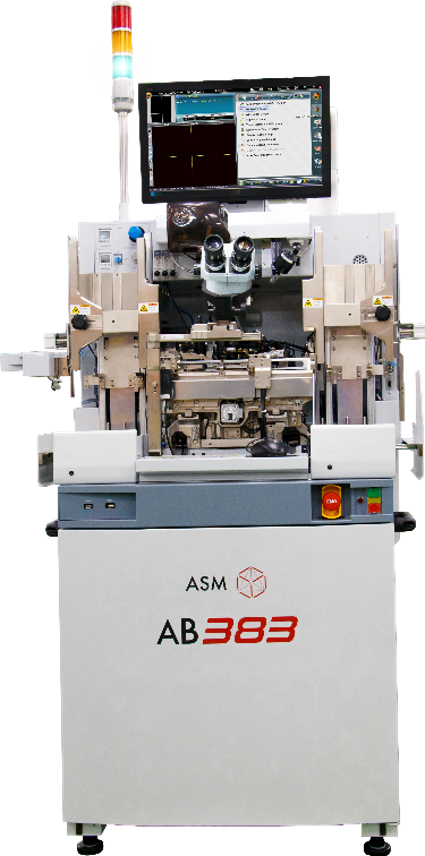}
	\hfill
	\includegraphics[width=0.65\linewidth]{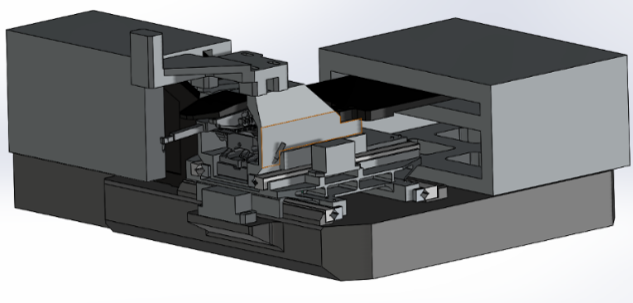}
	\caption{Industrial wirebonder setup (ASM AB383) used for experimental validation shown on the left and schematics for a motion stage on the right.}
	\label{fig:commercWirebonder}
\end{figure}

\subsection{System Description}\label{mustwork}


The experimental setup consists of a commercial ASM AB383 wirebonder, shown on the left side in  Figure~\ref{fig:commercWirebonder}. Wirebonders are high-precision mechatronic systems used in semiconductor manufacturing to establish electrical connections between integrated circuits and their packaging. The machine features a stacked planar XY motion stage actuated by linear 3-phase motors and equipped with high-resolution encoders for position feedback. A real-time control interface enables closed-loop operation, data acquisition, and controller deployment. Schematics of the motion stageare shown on the right side in Figure~\ref{fig:commercWirebonder}. The associated lab setup, including cabling and cooling, is shown in Figure~\ref{fig:wirebonder_lab}.

From a control perspective, the wirebonder constitutes a strongly coupled MIMO system with significant flexible dynamics in the frequency range relevant for high-speed bonding operations. These dynamics originate from structural compliance, actuator–structure interactions, and cross-axis coupling. As a result, constructing an accurate parametric model suitable for high-performance controller synthesis is challenging. Hence, the system is characterized exclusively through an experimentally measured FRF, which directly capture the relevant input–output behavior, including flexible modes, coupling effects, and the associated uncertainty. These FRF measurements serve as the sole plant description used in the subsequent identification, uncertainty modeling, and controller synthesis steps.
Figure~\ref{fig:wirebonder_lab} shows the laboratory setup to which the techniques developed in this paper are applied. This setup serves as the experimental platform for the controller synthesis and validation.
\begin{figure}[!t]
	\centering
	\includegraphics[width=0.75\linewidth]{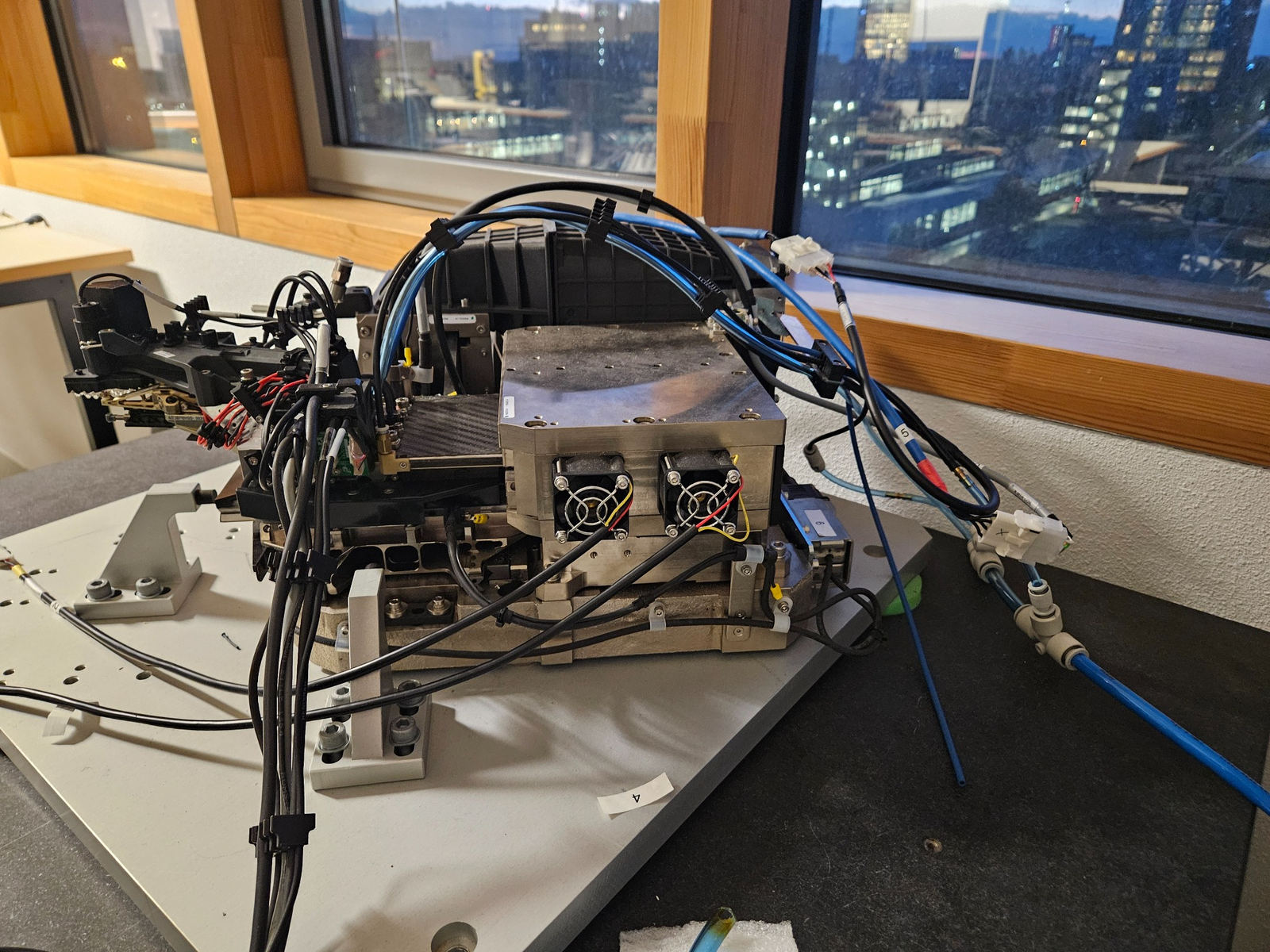}
	\caption{Motion stage of the wirebonder.}
	\label{fig:wirebonder_lab}
\end{figure}

\subsection{Identification and generator construction}\label{mustwork2}

The FRF of the motion stage of the wirebonder is obtained using orthogonal multisine excitation~\cite{dobrowiecki2006optimized} while operating the setup under a hand-tuned diagonal lead controller for stabilization. The multisine excites $247$ logarithmically spaced frequencies between the fundamental frequency $f_0$ and the Nyquist frequency $f_s/2$, rounded to the nearest integer multiple of $f_0$ to ensure periodic excitation. The plant inputs and outputs are subsequently processed using the \texttt{RobustLocalPolyAnal.m} routine from the Frequency Domain Toolbox~\cite{pintelon2012system}, implementing the Local Polynomial Method~\cite{gevers2011local} for nonparametric frequency-response estimation. Ten independent multisine realizations are generated, each repeated ten times and averaged within the identification framework. The exact sequence length and sampling frequency are omitted for confidentiality reasons. The resulting FRF is linearly interpolated onto a logarithmically spaced grid of $740$ frequency points spanning the same frequency range, which is used throughout the remainder of the paper for visualization and controller synthesis.

The resulting FRF estimate is characterized by the mean
$G_\mu(e^{j\omega_k})\in\mathbb{C}^{n_y\times n_u}$
and covariance
$\Sigma_G(e^{j\omega_k})\in\mb{H}_{\succ0}^{n_y n_u}$
at each measured frequency $\omega_k\in\Omega_N$, where $N=247$. The generator sets $\Gamma_{G_k}$ are constructed according to the procedure in Subsection~\ref{sec:stoc:gen} using a confidence region with probability of approximately $95.45\%$, corresponding to a $2\sigma$ confidence bound. The uncertain FRF estimate is shown in Fig.~\ref{fig:uncerBode}, where the mean estimate is depicted together with shaded confidence region, corresponding to the uncertainty bound. The off-diagonal FRFs exhibit relatively large uncertainty at low frequencies, whereas the diagonal dynamics become predominantly uncertain in the high-frequency region.

We design a controller by minimizing the weighted sensitivity objective $	\left\|W_\mr{z} (I+GK)^{-1}\right\|_{\mc{H}_\infty}$ where the weighting filter $W_\mr{z}$ encodes the desired sensitivity shape. The estimate of the generalized plant, based on the generators $\Gamma_{G_k}$ describing the estimate of $G$, is given by
\begin{equation}
	\widehat{\mc{P}}_k = \Matrix{cc}{I & -\mr{co}(\Gamma_{G_k}) \\ W_\mr{z} & -W_\mr{z} \mr{co}(\Gamma_{G_k})}.
\end{equation}
Next, generators for the coprime factors of the estimated FRF are constructed for every interval $[\omega_k, \omega_{k+1}]$ as specified by~\eqref{eq:uncerCoprimeG}. Specifically, let
\begin{equation}
	G_{\mu,k|k+1} :=\frac{G_\mu(e^{j \omega_k})+G_\mu(e^{j \omega_{k+1}})}{2},
\end{equation}
and let $\omega_{k|k+1} = \frac{\omega_k+\omega_{k+1}}{2}$, then we define the sampled, nominal coprime factorization by
\begin{equation}
	\begin{split}
		D_{k|k+1} &= (I+G_{\mu,k|k+1} K_0(e^{j \omega_{k|k+1}}))^{-1},\\ 
		N_{k|k+1}& = (I+G_{\mu,k|k+1} K_0(e^{j \omega_{k|k+1}}))^{-1} G_{\mu,k|k+1},
	\end{split}
\end{equation}
where $K_0 \in \mc{RH}_{\infty}^{n_\mr{u} \times n_\mr{y}}$ can in general be any stabilizing controller; in particular, here the controller $K_0$ is taken as the implemented controller for closed-loop identification. With these sampled, nominal coprime factorization available, the generators $\Gamma_{M_k}^\pm$ can be constructed as shown in Subsection~\ref{Sec:constM}.
\begin{figure}[t]
	\centering
	\includegraphics[width=0.98\linewidth]{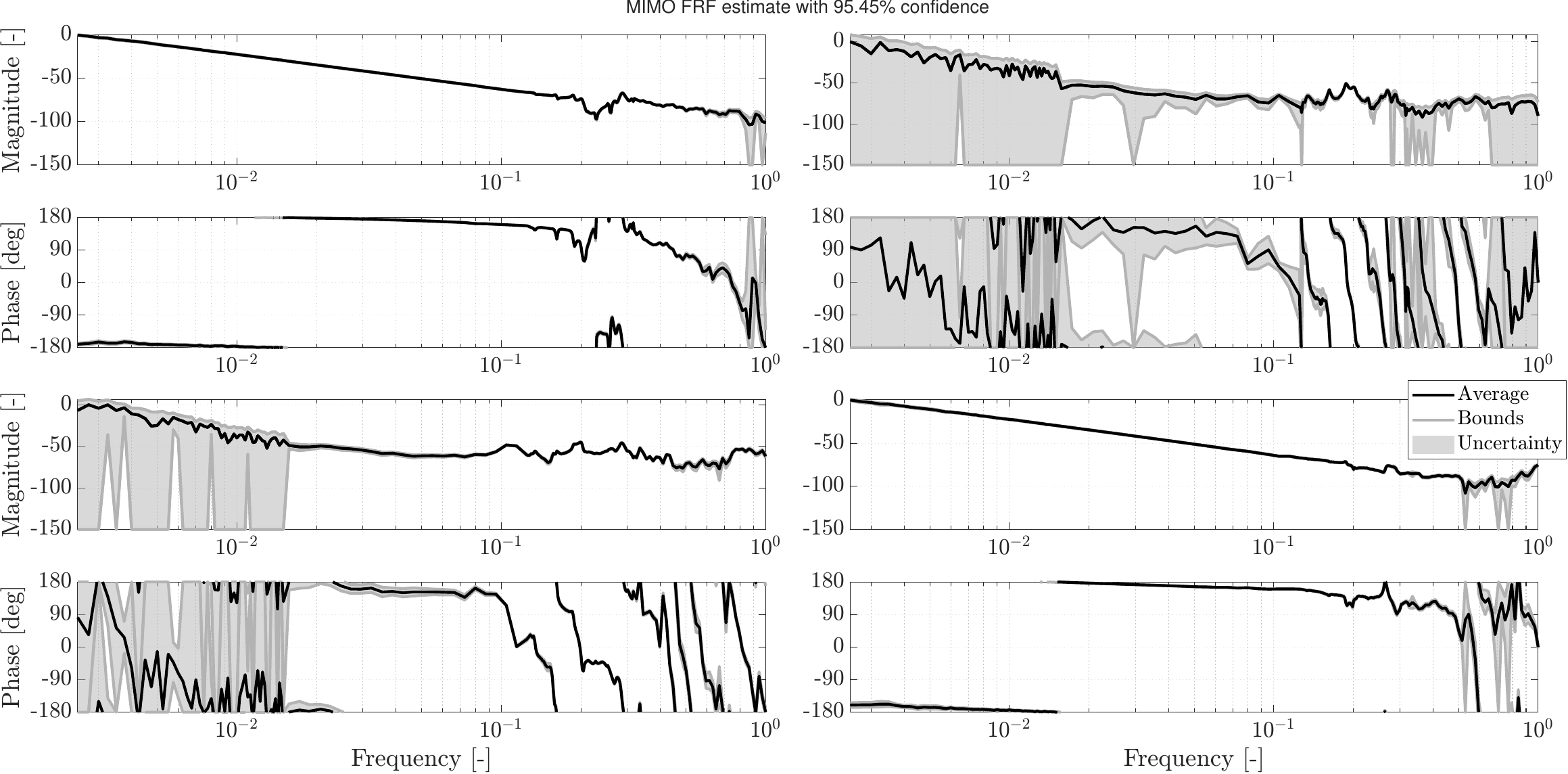}
	\caption{Bode plots of the estimated FRF of the MIMO wirebonder together with 95.45\% confidence bounds.}
	\label{fig:uncerBode}
\end{figure}

\subsection{Controller design}\label{mustwork3}
The robust controller synthesis Algorithm \ref{alg:robustAlg} requires an initially stabilizing controller $K_0 = U_0 V_0^{-1}$. Here, we use the same controller for the initialization as was used for closed-loop system identification for which normalized left coprime factorization \cite{zhou1996robust} is used to determine $U_0$ and $V_0$. To ensure that the controller parametrization does not introduce excessive variability relative to the plant uncertainty description, the controller order $n_\phi$ is chosen significantly smaller than the number of frequency points of the plant estimate. In particular, $n_\phi$ is selected to be at least twenty times smaller than the frequency grid size, which acts as a form of regularization and avoids spurious high-frequency behavior induced by the coprime factor parametrization. This design choice is consistent with standard robust control practice favoring low-order controllers relative to plant complexity \cite{skogestad2007multivariable}. This also makes sure that $r_{k|k+1}$ in Equation~\eqref{eq:controllerUncertainty} is relatively small. Therefore, we fix the controller order $n_\phi$ to be 12. The GOBFs used in the controller coprime factorization $\phi$ are chosen as
\begin{equation}
	\phi(z) = \Matrix{cccccccc}{1 & a_1(z) & \cdots & a_6(z) & 0 & 0 & \cdots & 0 \\ 0 & 0 & \cdots & 0 &  1 & a_1(z) & \cdots & a_6(z)}^\top,
\end{equation}
where $a_i(z) = \frac{n_i(z)}{(z-0.96)^i}$ and the numerators $n_i$ are such that all 7 basis function are orthogonal in $\mc{RH}_2(\mb{E})$, see~\cite{van2000modelling}. The frequency associated with the pole location $p = 0.96$ is chosen to reflect the desired bandwidth. In order to realize a state-space controller, as shown in Equation~\eqref{eq:controllerRealization}, the matrices $A_\phi$ and $B_\phi$ are determined from a balanced realization of the GOBFs $\phi(z)$.

In order to force integral action and high-frequency roll-off of the controller, the following parametrization of the feedthrough matrices $D_V$ and $D_U$ of the coprime factors $V$ and $U$, are applied:
\begin{subequations}
	\begin{align}
		D_V &= - C_V \phi(1), \\
		D_U &= - C_U \phi(-1).
	\end{align}
\end{subequations}
Hence, the optimization variables are reduced to $\theta = \mr{vec}(C_V, C_U)$. 

The diagonal weighting filter $W$ employed to shape the desired mixed-sensitivity specification is shown in Figure~\ref{fig:uncerSens}, together with the resulting closed-loop sensitivity function. The weighting filter encodes frequency-dependent performance objectives by emphasizing disturbance attenuation and tracking performance in the low-frequency range, while relaxing constraints at higher frequencies where robustness considerations dominate. 

In Figure~\ref{fig:uncerSens} the achieved normalized closed-loop sensitivity, evaluated over the uncertain plant FRF, is shown. The results demonstrate that the synthesized controller successfully enforces the prescribed sensitivity shaping across the frequency range of interest, despite the presence of significant plant uncertainty and multivariable coupling. In particular, the closed-loop sensitivity remains uniformly bounded by the inverse of the weighting filter, indicating that the mixed-sensitivity objectives are satisfied for all sampled uncertainty realizations.

\begin{figure}[t]
	\centering
	\includegraphics[width=0.98\linewidth]{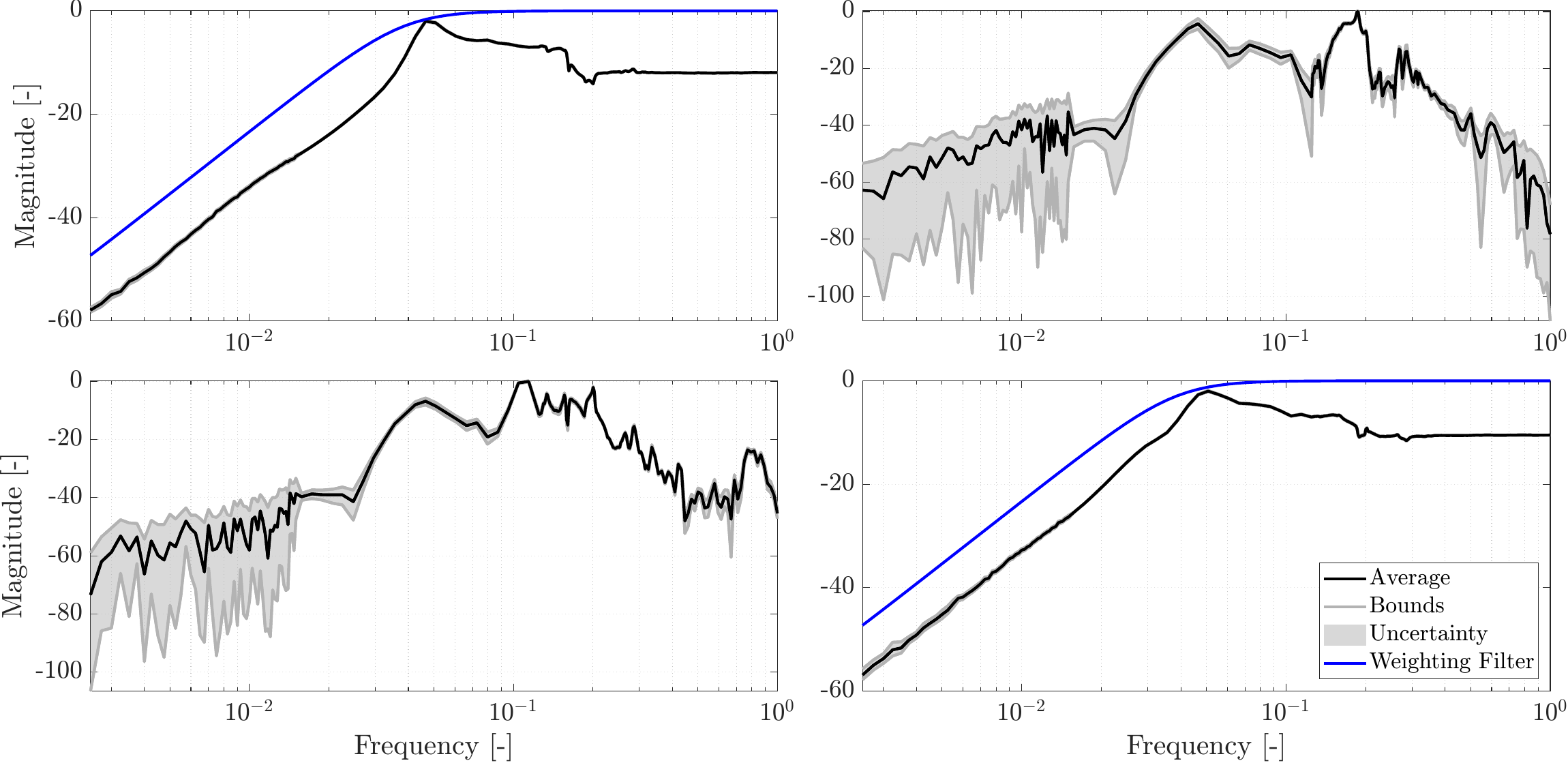}
	\caption{Shaped sensitivity function of the closed-loop calculated uncertain FRF estimate of the wirebonder and the designed controller.}
	\label{fig:uncerSens}
\end{figure}

These results highlight the effectiveness of the proposed frequency-domain, data-driven synthesis approach in translating frequency performance specifications, expressed via weighting filters, into verifiable closed-loop behavior on an uncertain MIMO system.


\subsection{Experimental validation}\label{mustwork4}
To experimentally validate that the synthesized controller stabilizes the real system represented by the uncertain FRF description, the controller is implemented on the wirebonder using a Simulink model. The model is automatically translated into C code and deployed to the FPGA via the Real-Time Interface (RTI) using dSPACE hardware (RTI1200 and MicroLabBox). 

\begin{figure}[t]
	\centering
	\includegraphics[width=0.98\linewidth]{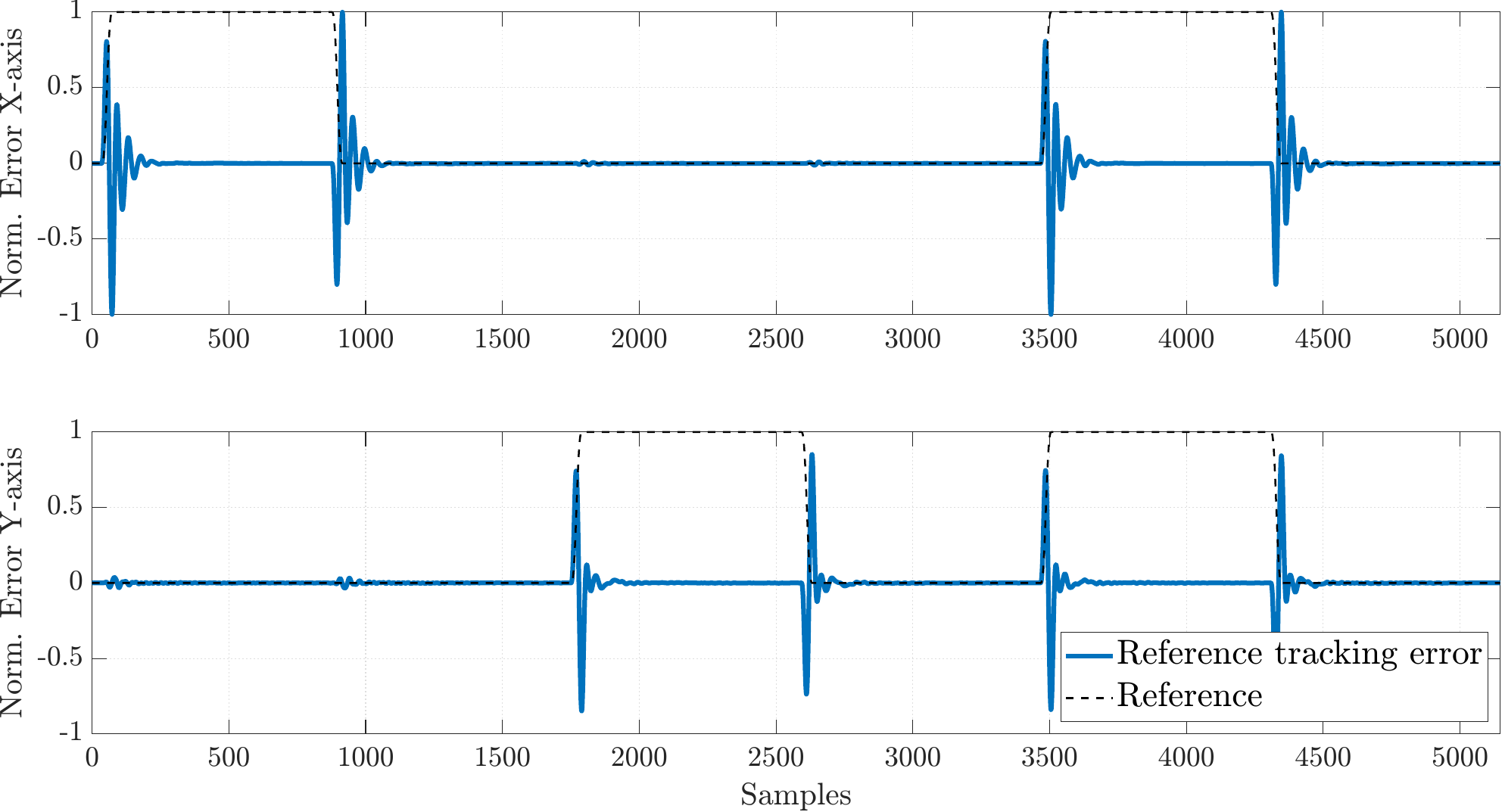}
	\caption{Reference tracking error plot of implemented controller on the isolated motion stage of the wirebonder.}
	\label{fig:TDRobustExp}
\end{figure}

The resulting closed-loop time-domain responses, shown in Figure~\ref{fig:TDRobustExp}, demonstrate stable and well-damped behavior across a range of operating positions, despite significant plant variability and multivariable coupling. These experimental results confirm that the proposed finite-data, frequency-domain synthesis framework yields controllers that not only satisfy the prescribed robustness and performance specifications in analysis, but also translate reliably to a high-precision industrial system without reliance on a parametric model. This provides strong empirical evidence for the practical viability of the proposed approach in realistic, uncertainty-dominated MIMO control applications.


%% file: sections/Conclusion.tex
\section{Conclusions}\label{sec:conclusion}

This paper introduces a novel iteratively convex controller design approach for MIMO systems using only measured, uncertain and sampled frequency response (FRF) data of the plant. The proposed approach explicitly addresses FRF estimation error and spectrum sampling. By transforming infinite-dimensional IQCs into finitely many LMIs, we make the design process computationally efficient while ensuring robust stability and performance guarantees. The method was successfully integrated into an iterative controller design procedure and demonstrated on an industrial MIMO wirebonder, showcasing its practical applicability.

%% file: sections/Acknowledgement.tex
\section*{Acknowledgments}
The authors would like to thank Dragan Kosti\'{c} and Thom Samuels for the invaluable technical discussions and their help in realizing the experimental results. 


%% file: sections/AppendixA.tex
\appendix


\subsection{Convex Sets and Generator Representations} \label{app:convex}

Convex sets are fundamental in robust control, especially when modeling uncertainty in frequency response functions. A set $\mc{S} \subset \mb{K}^{n \times m}$, with $\mb{K} \in \{\mb{R}, \mb{C}\}$, is convex if for any $X, Y \in \mc{S}$ and any $\alpha \in [0,1]$, the convex combination $\alpha X + (1-\alpha) Y$ also belongs to $\mc{S}$. This property ensures that all linear interpolations between points in the set remain inside the set, which is crucial for over-approximating uncertain frequency response data and guaranteeing robust stability.

\begin{definition}
	A finite set $\Gamma = \{\gamma_i\}_{i=1}^{n_\mathrm{g}} \subset \mb{K}^{n \times m}$ of $n_\mathrm{g}$ points is said to \emph{cover} a convex set $\mc{S}$ if every element $M \in \mc{S}$ can be expressed as a convex combination of generators:
	\begin{equation}
		M = \sum_{i=1}^{n_\mathrm{g}} \alpha_i \gamma_i, \quad \alpha_i \ge 0, \quad \sum_{i=1}^{n_\mathrm{g}} \alpha_i = 1.
	\end{equation}
	The convex hull of $\Gamma$, denoted by $\mr{co}(\Gamma)$, is the smallest convex set containing all generators, and, by construction, $\mc{S} \subseteq \mr{co}(\Gamma)$.
\end{definition}

The \emph{center} of the polytope specified by the generators $\Gamma$ is defined as $c(\Gamma) = (n_\mr{g})^{-1} \sum_{i=1}^{n_\mr{g}} \gamma_i$. We say that $\Gamma$ consists of \emph{independent generators} if no generator in the set can be written as a convex combination of the others. This ensures that each generator contributes uniquely to covering the set, which is important for both conceptual clarity and computational efficiency. 
%